\documentclass[%
 twocolumn,
 10pt,
 superscriptaddress,
 longbibliography,
%superscriptaddress,
%groupedaddress,
%unsortedaddress,
%runinaddress,
%frontmatterverbose, 
%preprint,
%showpacs,preprintnumbers,
%nofootinbib,
%nobibnotes,
%bibnotes,
 amsmath,amssymb,
 aps,
 pra,
%prb,
%rmp,
%prstab,
%prstper,
floatfix,
]{revtex4-1}

\usepackage{graphicx,amsthm}% Include figure files
\usepackage{dcolumn}% Align table columns on decimal point
\usepackage{bm}% bold math
\usepackage{times}
\usepackage{amsmath}
\usepackage{enumerate}
\usepackage[caption=false]{subfig}
\usepackage{qcircuit}
\usepackage{color}
\usepackage[colorlinks=true,linkcolor=blue,citecolor=red, linktocpage=true,breaklinks=true]{hyperref}

\usepackage{epstopdf}

%--------------my conventions---------------------

\newcommand{\eq}{\begin{equation}}
\newcommand{\en}{\end{equation}}
\newcommand{\eqa}{\begin{eqnarray}}
\newcommand{\ena}{\end{eqnarray}}

\newtheorem{theo}{Theorem}

\newcolumntype{Y}{>{\centering\arraybackslash}X}
\setlength{\extrarowheight}{.1cm}
%--------------------------------------------------

\begin{document}

\title{Depth optimization of quantum search algorithms beyond Grover's algorithm}% Force line breaks with \\

\author{Kun \surname{Zhang}}
%\email{Email: kun.h.zhang@stonybrook.edu}
\affiliation{Department of Chemistry, State University of New York at Stony Brook, Stony Brook, New York 11794-3400, USA}
\author{Vladimir  E. \surname{Korepin}}
%\email{Email: vladimir.korepin@stonybrook.edu}
\affiliation{C.N. Yang Institute for Theoretical Physics, State University of New York at Stony Brook, Stony Brook, New York 11794-3840, USA}
\affiliation{
Institute for Advanced Computational Science, State University of New York at Stony Brook, Stony Brook, New York 11794-5250, USA}

\date{\today}

\begin{abstract}
	
	Grover's quantum search algorithm provides a quadratic speedup over the classical one. The computational complexity is based on the number of queries to the oracle. However, depth is a more modern metric for noisy intermediate-scale quantum computers. We propose a new depth optimization method for quantum search algorithms. We show that Grover's algorithm is not optimal in depth. We propose a quantum search algorithm, which can be divided into several stages. Each stage has a new initialization, which is a rescaling of the database. This decreases errors. The multistage design is natural for parallel running of the quantum search algorithm. 
	
\end{abstract}

\maketitle

\section{\label{sec:intro}Introduction}

Quantum algorithms are designed to outperform the best classical ones \cite{NC10}. Many nondeterministic NP-hard problems still have only the exhaustive search way to solve them \cite{BBBV96}. The one-way function (oracle) $f(x)$ ($f:\{0,1\}^n\rightarrow \{0,1\}$) can identify the solution state: if $t$ is the solution (target state), then $f(t)=1$; otherwise the one-way function output is zero. The classical way to execute the exhaustive search is by querying each state in the database (of $N$ items) by the one-way function. In the worst case, the total number of queries to the oracle is $N-1$. The principle of quantum superposition provides a superior way to perform the exhaustive search. Suppose that $N=2^n$, where $n$ is the number of qubits to represent the database. Grover's algorithm can find one target state with oracle complexity $\mathcal O(\sqrt N)$, which quadratically outperforms the classical algorithm \cite{Grover97,GK17}. The oracle in Grover's algorithm is $U_f$: $U_f|x\rangle|y\rangle=|x\rangle|f(x)\oplus y\rangle$ with $x\in\{0,1\}^n$ and $y\in\{0,1\}$.

Quantum computers have been vastly developed over the last ten years \cite{Barends14,BHLSL16,FMLLDM17,Arute19}. Still shallow-depth algorithms can be realized on real quantum computers (for the noisy intermediate-scale quantum (NISQ) era, see Ref. \cite{Preskill18}). The width (the number of physical qubits) represents the size of quantum computers. The algorithm's depth (the number of consecutive  gate operations) represents the physical implementation time for the algorithm. Multiplying the width and depth we get the quantum volume, which gives a metric for NISQ computers \cite{CBSNG19}. Coherence time is limited in NISQ computers. A set of gates which can approximate any unitary operation is called the universal quantum gate set (Solovay–Kitaev theorem) \cite{NC10}. We assume that the quantum computer is equipped with a universal quantum gate set. So, the depth is counted by universal quantum gate operations. 

The quantum oracle $U_f$ is realized by quantum gates from the universal quantum gate set. We assume that the depth of the quantum oracle scales polynomially with $n$ \cite{FMLLDM17}. The oracle complexity would be equivalent to the depth complexity if the quantum oracle would be the only operation realized in Grover's algorithm. However, it is not true. Another unitary operation (diffusion operator) is required for Grover's algorithm \cite{Grover97,GK17}. How to choose the diffusion operator is related to the initial state preparation \cite{Grover98,BHMT00}. The unstructured population space $\{0,1\}^n$ (database) can be prepared in an equal superposition state on a quantum computer polynomial efficiently:
\begin{equation}
\label{def s n}
    |s_n\rangle=H^{\otimes n}|0\rangle^{\otimes n}
\end{equation}
with single-qubit Hadamard gate $H$ \cite{NC10}. Note that the initial state $|s_n\rangle$ can be efficiently prepared with a depth of one circuit. The diffusion operator has the constraint that the state $|s_n\rangle$ is the eigenvector of the diffusion operator with eigenvalue 1 \cite{Tulsi12,Tulsi15}. 

Grover's algorithm is the only threat to postquantum cryptography. The postquantum cryptography standardization proposed by NIST in 2016 introduced the depth bound. Recently, more studies focused on the resource estimation, such as width and depth, for Grover's algorithm instead of the traditional oracle complexity \cite{KHJ18,JNRV19}. Grover's algorithm is optimal in oracle complexity \cite{BBHT98,Zalka99}. However, no research addressed the depth of the quantum search algorithm. Surprisingly, the depth of the diffusion operator can be reduced to one \cite{Kato05,JRW17}. However, these algorithms have 1/2 maximal successful probability, and the expected depth is not as efficient as the original Grover's algorithm. Inspired by the quantum partial search algorithm (QPSA) \cite{GR05,KG06,Korepin05,KL06}, we introduce a new depth optimization for the quantum search algorithm. Our algorithm can have lower depth than Grover's algorithm. To further lower the depth, we can apply a divide-and-conquer strategy (combined with depth optimization). The divide-and-conquer strategy means that the search algorithm is realized by several stages. Each stage can find a partial address of the target state. The next-stage initial state is the rescaled version of the last-stage initial state. The divide-and-conquer strategy naturally allows the parallel running of the quantum search algorithm. 

If the oracle takes much more depths than diffusion operator depth, then the oracle complexity will be approximately equivalent to the depth complexity. We can define the ratio between oracle depth and diffusion operator depth. Above a critical ratio, Grover's algorithm is optimal in depth. Based on the depth optimization method proposed in this paper, we show that the critical ratio is $\mathcal O(n^{-1}2^{n/2})$. If we divide the algorithm into two stages, the critical ratio is a constant. 

The paper is organized as follow. In Sec. \ref{sec:Grover}, we briefly review  quantum search algorithms. The first one is Grover's original algorithm and the other is QPSA. We also set up  notations. In Sec. \ref{sec:dep_opt}, we introduce the depth optimization method for the quantum search algorithm. We also show how to combine the divide-and-conquer strategy with depth optimization. In Sec. \ref{sec:alpha}, we talk about the critical ratios. Below the critical ratio, we can have a search algorithm which has lower depth compared to Grover's algorithm. Parallel running of the quantum search algorithm is briefly discussed in Sec. \ref{sec:parallel}. Section \ref{sec:conclusion} gives conclusions and outlook. We wrote three Appendixes. Appendix \ref{Appendix examples 6} provides detailed examples of the $n=6$ search algorithm with depth optimizations; Appendix \ref{Appendix opt examples} lists the numerical details provided in the main text; Appendix \ref{Appendix alpha} shows the numerical values of critical ratios. 

\section{\label{sec:Grover}Review of Quantum Search Algorithms}

\subsection{\label{subsec:Grover}Grover's Algorithm} 

The quantum oracle $U_f$ flips the ancillary qubit, if the target state $|t\rangle$ is fed in. The ancillary qubit can be prepared in the superposition state $H|1\rangle = (|0\rangle-|1\rangle)/\sqrt 2$. Then the oracle gives a sign flip acting on the target state:
\begin{equation}
U_f(1\!\!1_{2^n}\otimes H)|x\rangle\otimes|1\rangle = (-1)^{f(x)}(1\!\!1_{2^n}\otimes H)|x\rangle\otimes|1\rangle
\end{equation}
Here $1\!\!1_{2^n}$ is the identity operator on the $2^n$ dimensional Hilbert space. For convenience, we denote the oracle $U_f$ as
\begin{equation}
\label{def U t}
U_t=1\!\!1_{2^n}-2|t\rangle\langle t|  
\end{equation}
if the ancillary qubit $H|1\rangle$ is prepared. The general phase flip can be constructed as follows: $U_{t,\phi}=1\!\!1_{2^n}-(1-e^{-i\phi})|t\rangle\langle t|$ with complex unit $i=\sqrt{-1}$. The generalized oracle $U_{t,\phi}$ has applications in the sure success search algorithm \cite{BHMT00,MTB18} and the fixed point search algorithm (for an unknown number of target states) \cite{YLC14}. Note that the operator $U_{t,\phi}$ ($\phi\neq \pi$) can be realized by two quantum oracles $U_f$ \cite{YLC14}. In this paper, we do \textit{not} consider the generalized oracle $U_{t,\phi}$ (low depth consideration). We concentrate on the one-target-state case. The depth optimization method in Sec. \ref{sec:dep_opt} can be easily generalized to multitarget cases.

The oracle $U_t$ reflects the state over the plane perpendicular to the target state. The most efficient diffusion operator (unstructured database search) is
\begin{equation}
\label{def I n}
D_n = 2|s_n\rangle\langle s_n|-1\!\!1_{2^n}
\end{equation}
Note that $|s_n\rangle$ defined in Eq. (\ref{def s n}) is the equal superposition of all items in the  database. The operator $D_n$ can be viewed as a reflection of the amplitude in the average. The diffusion operator $D_n$ does not query the oracle. Therefore, the oracle complexity does not include the resource cost by $D_n$. The diffusion operator $D_n$ is single-qubit-gate-equivalent to the generalized $n$-qubit Toffoli gate $\Lambda_{n-1}(X)$ \cite{NC10}. Here $X$ is the NOT gate (Pauli-X gate). The notation $\Lambda_{n-1}(X)$ implies the $n-1$ control qubits NOT gate. When $n=3$, $\Lambda_{2}(X)$ is the Toffoli gate. When $n=2$, $\Lambda_{1}(X)$ is the controlled-NOT (CNOT) gate. How to realize the $\Lambda_{n-1}(X)$ gate on a real quantum computer is highly nontrivial. It is well known that an $n$-qubit $\Lambda_{n-1}(X)$ gate can be constructed with linear $n$ depth or quadratic $n^2$ depth from the universal gate set (CNOT gate plus single-qubit gates) \cite{BBCDMSSSW95}. Recent works also show that the $n$-qubit $\Lambda_{n-1}(X)$ gate can be realized in $\log n$ depth if $n$-qubit ancillary qubits are provided \cite{HLZWW17} or qutrit states are applied \cite{GBDBRC19}. 

One query to oracle $U_t$ defined in Eq. (\ref{def U t}) combined with the diffusion operator $D_n$ defined in Eq. (\ref{def I n}) is called the Grover iteration or Grover operator:
\begin{equation}
\label{def G n}
G_n =  D_n U_t
\end{equation}
See Fig. \ref{fig G n} for the quantum circuit diagram of $G_n$. The diffusion operator $D_n$ reflects the average of the whole database. The operator $G_n$ is also called the {\it global Grover iteration} ({\it global Grover operator}). One Grover operator $G_n$ uses one query to oracle $U_f$. Applying $G_n$ iteratively on the initial state $|s_n\rangle$, the amplitude of the target state will be amplified. After $j$ Grover iterations, the success probability $P_n(j)$ is
\begin{equation}
\label{def P n}
P_n(j) = |\langle t|G_n^j|s_n\rangle|^2 = \sin^2((2j+1)\theta)  
\end{equation}
with $\sin\theta=1/\sqrt N$. When $j$ reaches $j_\text{max}=\lfloor \pi\sqrt N/4\rfloor$, the probability of finding the target state approaches 1. The maximal iteration number $j_\text{max}$ is the square root of $N$. Clearly, Grover's algorithm provides a quadratic speedup compared with the classical algorithm (in oracle complexity). The idea behind Grover's algorithm can be generalized into the amplitude amplification algorithm \cite{BHMT00}.

The success probability (finding the target state) does not scale linearly with the number of iterations. It suggests that Grover's algorithm becomes less efficient when $j$ approaches $j_\text{max}$. Previous works argued that the expected number of iterations $j/P_n(j)$ has the minimum at $j_\text{exp}=\lfloor 0.583\sqrt N \rfloor$, which is smaller than $j_\text{max}$ \cite{BBHT98,GWC00}. When $j$ is $j_\text{exp}$, the success probability is around 0.845. In practice, the iteration number $j_\text{exp}$ has a high probability to find the target state. The measurement result can be verified in classical ways. If the result fails, one has to run the algorithm again. The expected number of oracles is minimized at $j_\text{exp}$.

\begin{figure}
\subfloat[\label{fig G n} $G_n = D_nU_t$.]{%
  \includegraphics[width=0.48\columnwidth]{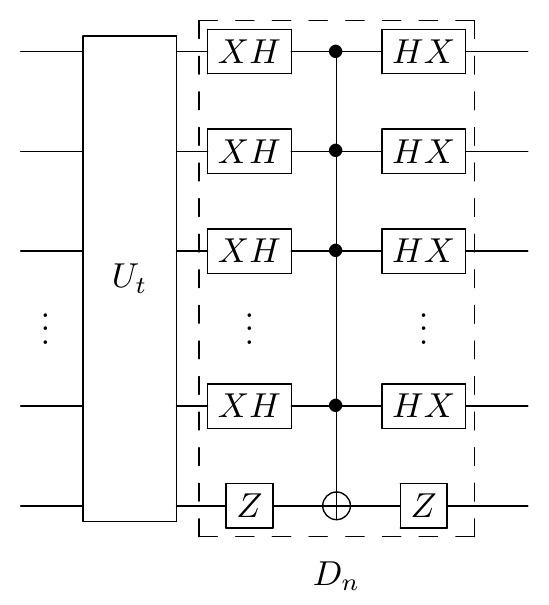}%
}\hfill
\subfloat[\label{fig G m} $G_m = D_mU_t$.]{%
  \includegraphics[width=0.5\columnwidth]{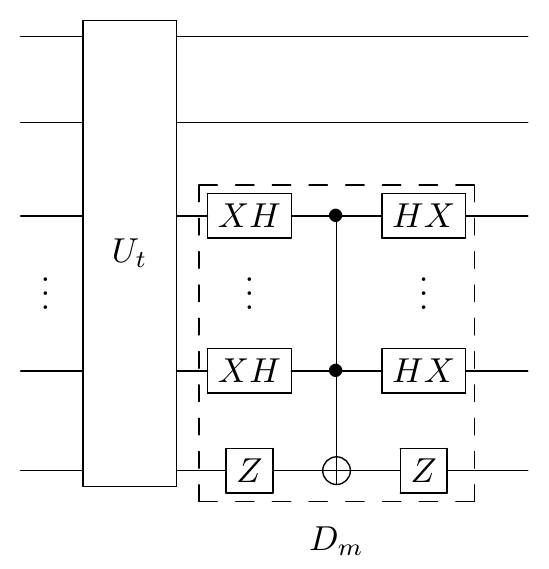}%
}
\caption{Quantum circuits of global Grover operator $G_n$ defined in Eq. (\ref{def G n}) and local Grover operator defined in Eq. (\ref{def G m}). The diffusion operator $D_n$ ($D_m$) is single-qubit-gate equivalent to the $n$-qubit Toffoli gate $\Lambda_{n-1}(X)$ ($m$-qubit Toffoli gate $\Lambda_{m-1}(X)$) \cite{NC10}. Here $X$ and $Z$ are Pauli gates, and $H$ is the Hadamard gate. The subspace where $D_m$ acts can be chosen arbitrarily.}
\label{fig G n m}
\end{figure}

\subsection{\label{subsec:QPSA}Quantum Partial Search Algorithm} 

The QPSA was introduced by Grover and Radhakrishnan \cite{GR05}. Since Grover's algorithm is optimal (in oracle complexity), the QPSA trades accuracy for speed. A database of $N$ items is divided into $K$ blocks: $N=bK$. Here $b$ is the number of items in each block. We can assume that the number $b$ is also a power of 2: $b=2^m$. And the number of blocks is $K=2^{n-m}$. The QPSA can find the block which has the target state. In other words, the QPSA finds the partial $(n-m)$-bit of the target state (which is $n$ bits long). The optimized QPSA can win over Grover's algorithm a number scaling as $\sqrt b$ \cite{GR05,Korepin05,KG06}. A larger block size (less accuracy) gives a faster algorithm. 

Suppose that the address of the target state $|t\rangle$ is divided into $|t\rangle=|t_1\rangle\otimes|t_2\rangle$. Here $t_1$ is $(n-m)$ bits long and $t_2$ is $m$ bits long. The task is to find $t_1$ instead of the whole $t$. Besides the diffusion operator $D_n$ in Eq. (\ref{def I n}), the QPSA introduces a new diffusion operator $D_{n,m}$:
\begin{equation}
    \label{def I n m}
    D_{n,m} = 1\!\!1_{2^{n-m}}\otimes (2|s_m\rangle\langle s_m|-1\!\!1_{2^m})
\end{equation}
The diffusion operator $D_{n,m}$ reflects around the average in a block (simultaneously in each block). The diffusion operator $D_{n,m}$ can be viewed as the rescaled version of $D_n$ in Eq. (\ref{def I n}): the database with size $2^n$ is rescaled into size $2^{m}$. We can define a new Grover operator as
\begin{equation}
    \label{def G m}
    G_{n,m}=D_{n,m}U_t
\end{equation}
See Fig. \ref{fig G m} for the quantum circuit diagram of $G_{n,m}$. The diffusion operator $D_{n,m}$ reflects the average of block items. The operator $G_{n,m}$ is also called the {\it local Grover iteration} ({\it local Grover operator}). For simplicity, we shorten the notations to $D_{m}\equiv D_{n,m}$ and $G_{m}\equiv G_{n,m}$ in the rest of paper. 

The QPSA is realized by  applying operators $G_{m}$ and $G_{n}$ on the initial state $|s_n\rangle$. Then partial bits $t_1$ can be found with high probability (computational basis measurement on the final state). In the QPSA, the amplitudes of all nontarget items in the target block are the same, and the amplitudes of all items in the nontarget blocks are the same. Therefore, we can follow only three amplitudes. Let us introduce a basis:
\begin{subequations}
	\begin{align}
	\label{basis t} &|t\rangle = |t_1\rangle\otimes |t_2\rangle, \\
	\label{basis ntt} &|ntt\rangle = \frac 1 {\sqrt{b-1}}\sum_{j\neq t_2}|t_1\rangle\otimes |j\rangle, \\
	\label{basis u} &|u\rangle =  \frac{1}{\sqrt{N-b}}\left(\sqrt N|s_n\rangle-|t\rangle-\sqrt{b-1}|ntt\rangle\right)
	\end{align}
\end{subequations}
The state $|ntt\rangle$ is the normalized sum of all nontarget states in the target block. The state $|u\rangle$ is the normalized sum of all items in the nontarget blocks. At the new basis, the initial state $|s_n\rangle$ in Eq. (\ref{def s n}) can be rewritten as
\begin{equation}
\label{def s n rewrite}
    |s_n\rangle = \sin \gamma \sin\theta_2|t\rangle + \sin \gamma \cos\theta_2|ntt\rangle+\cos\gamma |u\rangle
\end{equation}
The angle $\theta_2$ is defined as $\sin\theta_2 = 1/\sqrt b$. The angle $\gamma$ is defined as $\sin\gamma = 1/\sqrt K$. The global Grover operator $G_n$ defined in Eq. (\ref{def G n}) and the local Grover operator $G_m$ defined in Eq. (\ref{def G m}) can be reformulated as elements in the $O(3)$ group \cite{KV06}. Operators $G_{m}$ and $G_{n}$ have highly nontrivial commutation relations \cite{KV06}. The order of application of these operators is the key in the QPSA. Extensive studies have suggested that the optimal sequence (in oracle complexity) is $G_nG_m^{j_2}G_n^{j_1}$ \cite{KL06,KV06}. One can minimize the number of queries to the oracle (minimize $j_1+j_2+1$) given by a threshold success probability. The QPSA requires less number of oracles (the saved oracle number scales as $\sqrt b$) than Grover's algorithm. The QPSA can also be generalized into multitarget cases \cite{CK07,ZK18}. Interestingly, the QPSA can be performed in a hierarchical way: each time the QPSA finds several bits of the target bits $t$ \cite{KX07}.

\section{\label{sec:dep_opt}Depth Optimization}

\subsection{\label{subsec:MED}Minimal Expected Depth}

Depth is defined as the number of consecutive parallel gate operations. For example, the initial state $|s_n\rangle$ can be prepared with one depth circuit, see (\ref{def s n}). Suppose that the diffusion operator $D_n$ in Eq. (\ref{def I n}) has depth $\text{d}(D_n)$, which is the same as the depth of the $n$-qubit generalized Toffoli gate $\Lambda_{n-1}(X)$ \cite{NC10}. Different search tasks have different oracle realizations. We denote the ratio of  oracle depth $U_t$ and diffusion operator depth $D_n$ as $\alpha$:
\eq
\label{def alpha}
\alpha = \frac{\text{d}(U_t)}{\text{d}(D_n)}
\en
It is an important parameter for depth optimization. For the one-item search algorithm, the practical minimal value for $\alpha$ is 1: $\alpha\geq1$ \cite{FMLLDM17}. The ratio $\alpha$ maybe different for the same problem with a different database size. We fix $n$; then the ratio $\alpha$ is a constant for one problem. The design for a low-depth generalized Toffoli gate can also be a benefit for oracle depth \cite{GBDBRC19}. 

Given by $\text{d}(D_n)$ and $\alpha$, Grover's algorithm can be mapped to depth complexity directly. We define the \textit{minimal expected depth} (MED) of Grover's algorithm as:
\begin{equation}
  \label{def d G}
  \text d_\text{G}(\alpha) = \min_j \frac{\text d(G_n^j)}{P_n(j)}
\end{equation}
Here $P_n(j)$ defined in Eq. (\ref{def P n}) is the success probability of finding the target state (with $j$ Grover iterations). The numerator denotes the depth $\text d(G_n^j)=(\alpha+1)j\text d(D_n)$. The above optimization is the same as the expected iteration number optimization $j/P_n(j)$ \cite{BBHT98,GWC00}, up to a constant factor. Therefore, we can use $j_\text{exp}=\lfloor 0.583\sqrt N \rfloor$ in the MED. Note that we have $P_n(j_\text{exp})\approx 0.845$. Then we have
\begin{equation}
    \label{eqa d G}
    \text d_\text{G}(\alpha) \approx 0.69\times 2^{n/2}(\alpha+1)\text d(D_n)
\end{equation}
If the oracle can be constructed in polynomial depth $\text{d}(U_t)=\mathcal O(n^k)$, then the MED of Grover's algorithm scales as $\mathcal O(n^{k}2^{n/2})$ (assume that $k>1$). Grover's algorithm is optimal in oracle complexity \cite{BBHT98,Zalka99}. The minimal expected iteration number $j_\text{exp}$ is optimal. The scale $\mathcal O(n^{k}2^{n/2})$ is also optimal for depth complexity. However, we show that the number $\text d_\text{G}(\alpha)$ in Eq. (\ref{eqa d G}) is {\it not} optimal (if $\alpha$ in Eq. (\ref{def alpha}) is finite).

\subsection{\label{subsec: opt method} Optimization Method}

The local diffusion operator $D_m$ defined in Eq. (\ref{def I n m}) has lower depth than the global diffusion operator $D_n$ in Eq. (\ref{def I n}). The optimization idea is \textit{to replace the global diffusion operator by the local diffusion operator}. The global Grover operator $G_n$ defined in Eq. (\ref{def G n}) does not commute with the local Grover operator $G_m$ in Eq. (\ref{def G m}) \cite{KV06}. The order of $G_n$ and $G_m$ is important. Suppose that we have the sequence
\begin{equation}
\label{def S n m}
   S_{n,m}(j_1,j_2,\ldots,j_q) = G_n^{j_1}G_m^{j_{2}}\cdots G_n^{j_{q-1}}G_m^{j_q} 
\end{equation}
Here $\{j_1,j_2,\ldots,j_q\}$ are some non-negative integers. We have 
\begin{equation}
    j_\text{tot} = \sum_{p=1}^q j_{p}
\end{equation}
total number of queries to the oracle. To remove the ambiguity in the notation $S_{n,m}(j_1,j_2,\ldots,j_q)$, we require that \textit{the last number $j_q$ is always the number of local Grover operators}. For example, $S_{6,4}(1,2) = G_6G^2_4$ and $S_{6,4}(1,1,0) = G_4G_6$. Note that $S_{n,m}(j,0)=G^j_n$ is the original Grover algorithm. Since the sequence $S_{n,m}(j,0)=G^j_n$ does not have any local Grover operators, the number $m$ is irrelevant. As convention, we choose the notation $S_{n}(j,0)=S_{n,m}(j,0)$. The sequence $S_{n,m}(j_1,j_2,\ldots,j_q)$ can find the target state with probability:
\eq
P_{n,m}(j_1,j_2,\ldots,j_q) = |\langle t|S_{n,m}(j_1,j_2,\ldots,j_q)|s_n\rangle|^2
\en
Then we can define the expected depth of the $S_{n,m}(j_1,j_2,\ldots,j_q)$ algorithm. We want to minimize the expected depth, like for Grover's algorithm (\ref{def d G}). Define a new MED:
\begin{equation}
\label{d 1 G}
   {\rm d}_1(\alpha) = \min_{m,j_1,j_2,\ldots,j_q} \frac{{\rm d}(S_{n,m}(j_1,j_2,\ldots,j_q))}{P_{n,m}(j_1,j_2,\ldots,j_q)}
\end{equation}
The minimization goes through non-negative integers $\{j_1,j_2,\ldots,j_q\}$. We also optimize the number $m$ (positive integer), which is $m<n$. The minimal value for $m$ is 2. The subscript 1 defined in $\text d_1(\alpha)$ suggests that we find the target state in one stage, i.e., no measurement within the algorithm until the end. In the quantum circuit model, a one-stage algorithm means only three steps: initialization, unitary operations and measurements. We can define multistage algorithms, which have several rounds of initializations, unitary operations, and measurements. Later we define the MED of multistage search algorithms.

Let us see one example. For $n=6$, Grover's algorithm has the MED when $j=4$:
\begin{equation}
  P_6(4) = |\langle t|G_6^4|s_6\rangle|^2\approx 0.816
\end{equation}
Consider a new sequence:
\begin{equation}
    S_{6,4}(1,1,2) = G_4G_6G_4^2
\end{equation}
and $S_{6,4}(1,1,2)$ gives the success probability
\begin{equation}
    P_{6,4}(1,1,2) = |\langle t|S_{6,4}(1,1,2)|s_6\rangle|^2\approx 0.755
\end{equation}
Note that both sequences $G_6^4$ and $G_4G_6G_4^2$ have four oracles. According to \cite{BBCDMSSSW95}, six-qubit and four-qubit Toffoli gates can be decomposed into 64 and 16 depth circuits (with single- and two-qubit gates). We suppose that $\text d(D_6)=64$ and $\text d(D_4)=16$. One can find that if the ratio $\alpha$ in Eq. (\ref{def alpha}) is $\alpha<2.029$, then the new sequence $G_4G_6G_4^2$ has a lower expected depth. More examples (about the $n=6$ search algorithm) with quantum circuit diagrams can be found in Appendix \ref{Appendix examples 6}.

We can go back to Grover's algorithm if the number of $G_m$ is zero. We always have 
\begin{equation}
    \text d_1(\alpha)\leq\text d_\text{G}(\alpha)
\end{equation}
The choice of subspace (acted upon by local diffusion operators $D_m$ defined in (\ref{def I n m})) can be arbitrary, such as qubits with high connectivity in real quantum computers. But all local diffusion operators $D_m$ should act on the same qubits. For example, the sequence $S_{6,4}(1,1,2)$ has three local Grover operators. The three local diffusion operators are acting on the same four qubits. Making the wrong choice of the subspace can dramatically increase the number of invariant amplitude subspaces. Such a strategy may have some advantages in search algorithms, but it is beyond the scope of this paper. 
  
The minimization results will depend on: the size of the database (the number $n$), the ratio between oracle depth $\text d(U_t)$ and diffusion operator depth $\text d(D_n)$ (the value of $\alpha$ defined in (\ref{def alpha})); how $\text d(D_n)$ scales with $n$ (logarithmic, linear, or quadratic with $n$). In numerical optimizations, we can set some constraints which rule out the possibility $\text d_1(\alpha)<\text d_\text{G}(\alpha)$. For example, we can set the total number of $G_n$ to less than $\lfloor0.69\sqrt N\rfloor$; if the number of $G_n$ is $j$, then the number of $G_m$ should be less than $\lfloor(0.69\sqrt N-j)(\alpha+1)/\alpha\rfloor$. As examples, we find the optimal sequence for $n=4,5,\ldots,10$ with $\alpha=1$ (assuming $\mathcal O(n)$ depth of the $\Lambda_{n-1}(X)$ gate \cite{BBCDMSSSW95}). The estimated depths are plotted in Fig. \ref{fig 2}. Details of the corresponding optimal sequences and success probabilities can be found  in Appendix \ref{Appendix opt examples}. 

\begin{figure}
    \begin{center}
    	\includegraphics[width=\columnwidth]{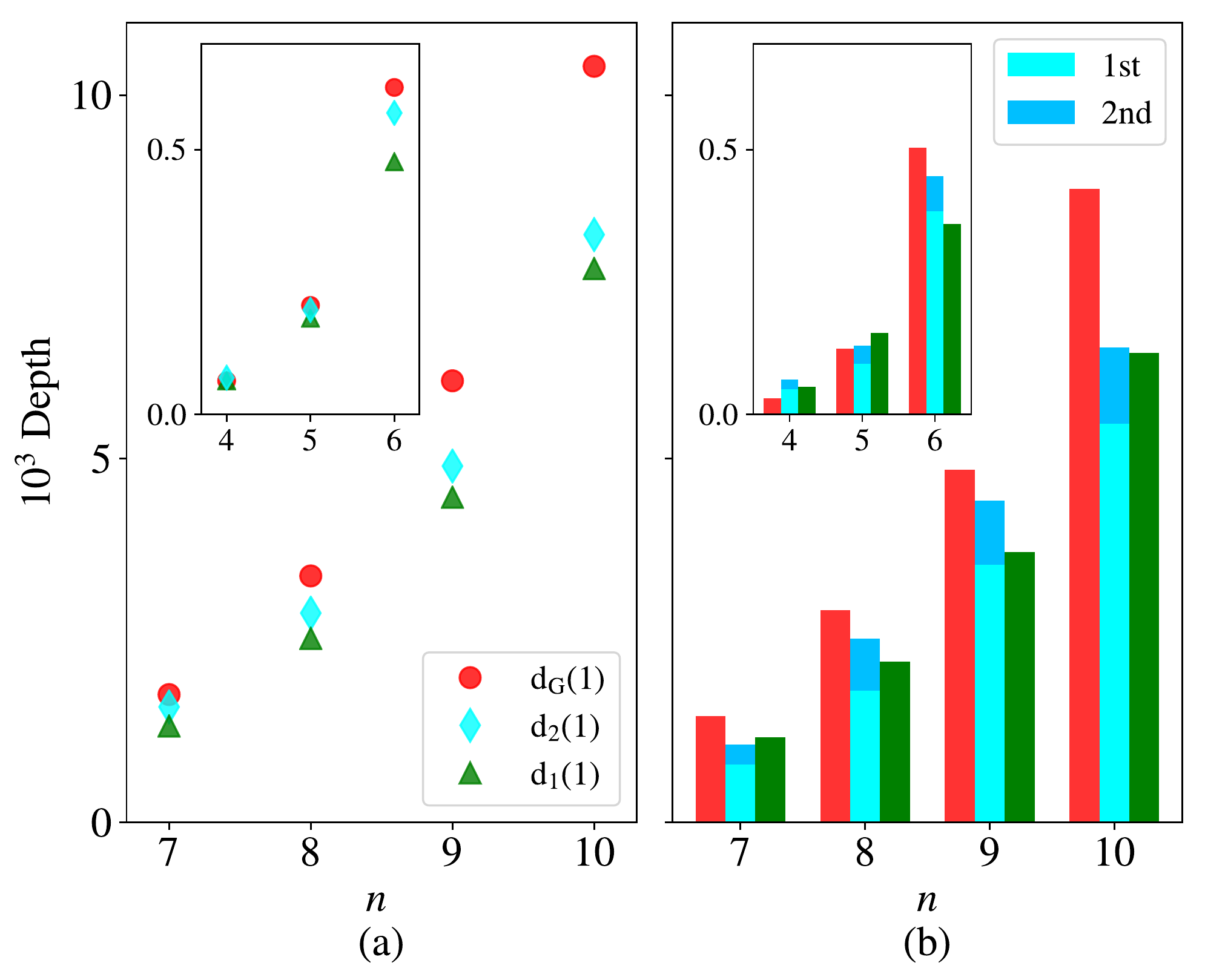}
    	\caption{(a) Estimated $\text d_\text{G}(\alpha)$ (MED of Grover's algorithm is defined in Eq. (\ref{def d G})), $\text d_1(\alpha)$ defined in Eq. (\ref{d 1 G}) and $\text d_2(\alpha)$ defined in Eq. (\ref{d 2 G}) with $\alpha=1$. Depth $\text d(D_n)$ is counted using the optimal results in Ref. \cite{BBCDMSSSW95}. The corresponding optimal sequences and success probabilities are listed in Appendix \ref{Appendix opt examples}. (b) Depth of the optimal sequence. The left (red) bar is Grover's algorithm. The right (green) bar is the optimal sequence from $\text d_1(1)$ defined in Eq. (\ref{d 1 G}). $\text d_2(1)$ has two stages: the bottom of the middle bar is the depth of the first stage circuit and the top of the middle bar is the depth of the second stage circuit. }\label{fig 2}
    \end{center}
\end{figure}

\subsection{\label{subsec:divded and conquer} Depth Optimizations for Multistage Quantum Search Algorithms}

In the NISQ era, errors can be suppressed if a long algorithm is divided into shorter pieces (by new initializations and measurements). Inspired by the hierarchy QPSA \cite{KX07}, we propose depth optimizations for the multistage quantum search algorithm. For simplicity, we consider the two-stage quantum search algorithm firstly.

Suppose that the target state is divided into two-parts:
\begin{equation}
\label{eqa t t1}
    |t\rangle=|t_1\rangle\otimes|t_2\rangle
\end{equation}
Suppose that the bit length of $t_1$ is $m_1$ and the bit length of $t_2$ is $m_2$. Note that we have $m_1+m_2=n$. After first stage, the search algorithm can find $|t_1\rangle$ with high probability. Based on the result on the first stage, we can rescale the database. After the second stage, the algorithm can find $|t_2\rangle$ with high probability (if $|t_1\rangle$ is found in the first stage). The algorithm has the following steps: 

\begin{enumerate}[Step 1:]
    
    \item Initialize the state to $|s_n\rangle$ defined in Eq. (\ref{def s n}).
    
    \item Perform the sequence  
        \begin{equation}
            S^{(1)}_{n,m_2}(j_1,j_2,\ldots,j_q) = G_n^{j_1}G_{m_2}^{j_{2}}\cdots G_n^{j_{q-1}}G_{m_2}^{j_q}
        \end{equation}
    on the initial state $|s_n\rangle$. The local diffusion operator $D_{m_2}$ (defined in $G_{m_2}$) is acting on $m_2$ qubits. 
    
    \item Measure the qubits (computational basis measurements) which do \textit{not} have the local diffusion operator $D_{m_2}$ acting on them. Suppose that we get the classical results: $t'_1\in\{0,1\}^{m_1}$. The probability that $t'_1=t_1$ is denoted as $P^{(1)}_{n,m_2}(j_1,j_2,\ldots,j_q)$.
    
    \item Initialize the state to
        $$ |t_1'\rangle\otimes|s_{m_2}\rangle
        $$
        Here $|s_{m_2}\rangle$ is the rescaled initial state:
        \begin{equation}
            |s_{m_2}\rangle = H^{\otimes m_2}|0\rangle^{\otimes m_2}
        \end{equation}
        
    \item Perform the sequence  
        \begin{equation}
            S^{(2)}_{m_2,m'}(j'_1,j'_2,\ldots,j'_q) = G_{m_2}^{j'_1}G_{m'}^{j'_2}\cdots G_{m_2}^{j'_{q-1}}G_{m'}^{j'_q}
        \end{equation}
    on the new initial state. We have $m'<m_2$. The diffusion operator $D_{m_2}$ (defined in $G_{m_2}$) is acting on $|s_{m_2}\rangle$. And the diffusion operator $D_{m'}$ is acting on the subspace of $|s_{m_2}\rangle$.
    
    \item Measure the qubits (computational basis measurements) which have the initial state $|s_{m_2}\rangle$. Suppose that we get the classical results: $t'_2\in\{0,1\}^{m_2}$. The probability that $t'_2=t_2$ is denoted as $P^{(2)}_{m_2,m'}(j'_1,j'_2,\ldots,j'_q)$.
    
    \item Verify the solution $|t'\rangle = |t_1'\rangle\otimes|t_2'\rangle$ by classical oracle. If the solution is the target item, then stop; if not, back to step 1.
\end{enumerate}

Steps 1-3 are the first stage: we find $t_1$ with high probability. Steps 4-6 are the second stage: we find the remaining bits of the target state. Step 7 is used to verify. Different sequences $S^{(1)}_{n,m_2}(j_1,j_2,\ldots,j_q)$ and $S^{(2)}_{m_2,m'}(j'_1,j'_2,\ldots,j'_q)$ give different success probabilities $P^{(1)}_{n,m_2}(j_1,j_2,\ldots,j_q)$ and $P^{(2)}_{m_2,m'}(j'_1,j'_2,\ldots,j'_q)$. We want to find the MED. The MED of the two-stages search algorithm is 
\begin{widetext}
\begin{equation}
\label{d 2 G}
   {\rm d}_2(\alpha) = \min_{m_2,m',j_1,\ldots,j_q,j'_1,\ldots,j'_q} \frac{{\rm d}(S^{(1)}_{n,m_2}(j_1,j_2,\ldots,j_q))+{\rm d}(S^{(2)}_{m_2,m'}(j'_1,j'_2,\ldots,j'_q))}{P^{(1)}_{n,m_2}(j_1,j_2,\ldots,j_q)P^{(2)}_{m_2,m'}(j'_1,j'_2,\ldots,j'_q)}
\end{equation}
\end{widetext}
We optimize the total expected depth. We do not optimize the expected stage depth, because we cannot verify the partial bit by neither classical nor quantum oracle. Note that $m_2$ is the bit length of $t_2$. We can either fix $m_2$ or optimize different choices of $m_2$. In the definition of ${\rm d}_2(\alpha)$, we optimize the choices of $m_2$. The second-stage algorithm is a rescaled version of the full search algorithm. Such a two-stage quantum search algorithm (with depth optimization) can be easily generalized to the multi-stage quantum search algorithm. 

As an example, let us consider the $n=4$ two-stage search algorithm. Grover's algorithm (one-stage search algorithm) has the success probability 
\begin{equation}
    P_4(3)=|\langle t|G_4^3|s_4\rangle|^2\approx 0.961
\end{equation}
In a two-stage search algorithm, we divide the target state into two parts: $|t\rangle=|t_1\rangle|t_2\rangle$. We choose the first-stage sequence as $S^{(1)}_{4,2}(1,1)=G_4 G_2$. Then we measure the two qubits which do \textit{not} have $D_2$ (defined in $G_2$) acting on them. The probability that the measurement results reveal $|t_1\rangle$ is
\begin{equation}
    P^{(1)}_{4,2}(1,1)\approx 0.953
\end{equation}
Suppose that the measurement results are $|t'_1\rangle$ after the first stage. Then we rescale the initial state as $|t'_1\rangle\otimes|s_2\rangle$. We choose the second stage sequence as $S^{(2)}_{2}(1,0)=G_2$. Recall that the two-qubit Grover's algorithm can find the target state with 100\% probability with one Grover operator. Therefore, the second-stage success probability is
\begin{equation}
    P^{(2)}_{2}(1,0)=1
\end{equation}
Then the total success probability is
\begin{equation}
    P^{(1)}_{4,2}(1,1)P^{(2)}_{2}(1,0)\approx 0.953
\end{equation}
The result is quite close to Grover's algorithm with the same number of oracles, but the depth in each stage is less than in Grover's algorithm. 

Another interesting example (two-stage $n=4$ search algorithm) is that the sequence $S^{(1)}_{4,2}(1,2)$ gives probability $1$ for finding $t_1$. Combined with the second-stage sequence $S^{(2)}_{2}(1,0)$, we find a new approach for the $n=4$ exact search algorithm \cite{Diao10}. We estimate $\text d_2(\alpha)$ with $\alpha=1$ for the $n=4,5\ldots,10$ search algorithms, see Fig. \ref{fig 2}. The corresponding optimal sequences are listed in Appendix \ref{Appendix opt examples}. See Appendix \ref{Appendix examples 6} for more examples (with quantum circuit diagrams) on two-stage quantum search algorithms. 

\section{\label{sec:alpha}Critical Ratios}

\subsection{\label{sec: one stage alpha}The Critical Ratio for the One-stage Algorithm}

Grover's algorithm is optimal in the number of queries to the oracle \cite{BBHT98,Zalka99}. Grover's algorithm is a one-stage search algorithm: no measurement occurs within the algorithm until the end. When $\alpha\rightarrow\infty$, we expect ${\rm d}_1(\alpha)=\text d_\text{G}(\alpha)$ (no local diffusion operators). Here ${\rm d}_1(\alpha)$ is defined in Eq. (\ref{d 1 G}). And $\text d_G(\alpha)$ defined in Eq. (\ref{def d G}) is the MED of Grover's algorithm. We define the critical alpha $\alpha_{c,1}$ for the one-stage search algorithm:
\eq
\label{def alpha c}
\alpha_{c,1}=\max\{\alpha|{\text d}_1(\alpha) <{\text d}_{\text G}(\alpha)\}
\en
The subscript 1 in $\alpha_{c,1}$ denotes the one-stage search algorithm. Below $\alpha_{c,1}$, the depth of Grover's algorithm is {\it not} optimal. Based on the depth optimization method proposed in Sec. \ref{subsec: opt method}, we can give an estimation of $\alpha_{c,1}$:
\begin{theo}
    \label{theorem 1} $\alpha_{c,1}=\mathcal O(n^{-1}2^{n/2})$. 
\end{theo}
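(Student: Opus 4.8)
\section*{Proof proposal}

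The plan is to establish the bound from both directions: an upper bound $\alpha_{c,1}=\mathcal{O}(n^{-1}2^{n/2})$ (i.e., for $\alpha$ above this scale, Grover's algorithm is already optimal), and a matching lower bound (i.e., for $\alpha$ below a constant times $n^{-1}2^{n/2}$, some sequence $S_{n,m}$ beats $\text d_\text{G}(\alpha)$). The key observation driving everything is the structure of $\text d(S_{n,m}(j_1,\ldots,j_q))$: a global Grover operator $G_n$ costs $(\alpha+1)\text d(D_n)$ in depth, while a local operator $G_m$ costs $(\alpha+\beta_m)\text d(D_n)$ where $\beta_m=\text d(D_m)/\text d(D_n)<1$. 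So replacing global by local operators saves depth per oracle call, but the total oracle count $j_\text{tot}$ in the optimized sequence cannot drop below the Grover optimum (by optimality of Grover in oracle complexity, cited as \cite{BBHT98,Zalka99}) without paying in success probability. The trade-off is therefore: how much depth can one save per oracle versus how many extra oracles one is forced to add.

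First I would set up the one-parameter family of candidate sequences that the optimization in Eq.~(\ref{d 1 G}) ranges over, and reduce to the simplest nontrivial family, e.g. $S_{n,m}(j_1,j_2)=G_n^{j_1}G_m^{j_2}$ or the three-block $G_n G_m^{j_2} G_n^{j_1}$ suggested by the QPSA analysis. Using the $O(3)$ representation of $G_n$ and $G_m$ from \cite{KV06}, together with the decomposition of $|s_n\rangle$ in Eq.~(\ref{def s n rewrite}), I would write the success probability $P_{n,m}$ in closed form in terms of the angles $\theta$, $\theta_2$, $\gamma$. The point is that choosing $m$ small (block size $b$ large) makes the local rotation angle small, so the local operators move the state slowly; the optimal scheme uses $G_n$ to do the bulk of the amplitude amplification and $G_m$ only for a bounded amount of ``fine adjustment.'' Quantitatively, I expect that any sequence beating Grover must still spend $\Theta(2^{n/2})$ global-operator oracle calls, so the depth is $\Theta\!\big((\alpha+1)2^{n/2}\text d(D_n)\big)$ plus the local correction; the saving is at most $\mathcal{O}(2^{n/2}\,\text d(D_n))$ in absolute terms (one saves at most $\mathcal O(1)$ per replaced global operator, and one can replace at most $\mathcal O(2^{n/2})$ of them without destroying the success probability), i.e. a relative saving of order $1/(\alpha+1)$. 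Balancing the saving against the cost of $\text d(D_n)=\Theta(n)$ per extra oracle needed to restore probability gives the crossover at $\alpha\sim n^{-1}2^{n/2}$.

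For the lower bound on $\alpha_{c,1}$ I would exhibit an explicit sequence. Take $m$ with $n-m=\mathcal{O}(1)$ so the outer ``block search'' needs only $\mathcal{O}(1)$ applications of $G_n$-type structure while $\Theta(2^{(n-m)/2})=\Theta(1)$... more carefully: run $j_1=\Theta(2^{n/2})$ global iterations to get most of the way, then a small number $j_2$ of local iterations whose depth is $(\alpha+\beta_m)j_2\,\text d(D_n)$ with $\beta_m$ as small as the $\log$-depth construction \cite{HLZWW17,GBDBRC19} allows, $\beta_m=\mathcal O(\log m/n)$. Comparing $\text d_1(\alpha)$ for this sequence against $\text d_\text{G}(\alpha)\approx 0.69\cdot 2^{n/2}(\alpha+1)\text d(D_n)$ and showing the former is strictly smaller precisely when $\alpha\lesssim c\, n^{-1}2^{n/2}$ would give $\alpha_{c,1}\geq c\,n^{-1}2^{n/2}$. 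For the upper bound, I would argue that if $\alpha\gg n^{-1}2^{n/2}$ then in $\text d(S_{n,m})=\sum(\alpha+1)\text d(D_n)$(over global)$+\sum(\alpha+\beta_m)\text d(D_n)$(over local) the $\alpha$-terms dominate, so $\text d(S_{n,m})\gtrsim \alpha\, j_\text{tot}\,\text d(D_n)\geq \alpha\, j_\text{exp}\,\text d(D_n)$ up to the probability normalization, which is $\text d_\text{G}(\alpha)$ to leading order — forcing the optimum back to the pure Grover sequence.

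The main obstacle I anticipate is the lower-bound direction done honestly: one must show not just that \emph{a particular} sequence does well, but control the full minimization, in particular rule out exotic long alternating sequences $G_n^{j_1}G_m^{j_2}\cdots$ that might conceivably save more. The commutation relations of $G_n$ and $G_m$ in $O(3)$ \cite{KV06} are nontrivial, so a clean argument would instead lower-bound $\text d(S_{n,m})/P_{n,m}$ directly: bound $P_{n,m}\le 1$, bound the number of oracle calls from below by the Grover/BBHT--Zalka optimality \cite{BBHT98,Zalka99} applied to the fact that at the end the state must have overlap $\Omega(1)$ with $|t\rangle$ (so $j_\text{tot}=\Omega(2^{n/2})$ regardless of the mix of global/local operators), and then observe each oracle call contributes at least $\alpha\,\text d(D_n)$ to the depth. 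That reduces the whole theorem to the elementary inequality comparing $\alpha\cdot\Omega(2^{n/2})\cdot\text d(D_n)$ against the at-most-$\mathcal O(2^{n/2}\text d(D_n))$ absolute depth saving — which crosses over exactly at $\alpha = \Theta(2^{n/2}/n)$ since $\text d(D_n)=\Theta(n)$. Packaging the constants so that $j_\text{exp}$ and the $0.69$ factor land correctly is routine but fiddly, and I would relegate the numerics to the Appendix as the authors indicate.
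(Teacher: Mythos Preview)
Your trade-off framework is right, but the explicit construction you propose for the lower bound is backwards and would not deliver the $n^{-1}2^{n/2}$ scale. Running $j_1=\Theta(2^{n/2})$ \emph{global} iterations ``to get most of the way'' and then appending a bounded number $j_2$ of local ones saves only $\mathcal O(j_2)$ diffusion operators' worth of depth --- a constant --- so the crossover in $\alpha$ you get from that comparison is $\mathcal O(1)$, not $\mathcal O(n^{-1}2^{n/2})$. The paper does the opposite: it replaces a constant \emph{fraction} of all the diffusion operators by local ones. Concretely it takes $m=n-1$ and iterates the three-oracle sandwich $S_{n,n-1}(1,1,1)=G_{n-1}G_nG_{n-1}$, computes its $3\times 3$ matrix in the $\{|t\rangle,|ntt\rangle,|u\rangle\}$ basis, diagonalizes, and shows that $|\langle t|S_{n,n-1}(1,1,1)^{\tilde j}|s_n\rangle|^2=\sin^2(3\sqrt 2\,\tilde j\,\theta_2)+\mathcal O(\theta_2)$, which differs from the Grover probability at the same oracle count $j=3\tilde j$ by only $\delta=\mathcal O(2^{-n/2})$. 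Every sandwich saves $2(\text d(D_n)-\text d(D_{n-1}))$ in depth relative to $G_n^3$; over $\tilde j=\Theta(2^{n/2})$ sandwiches the total saving is $\Theta(2^{n/2})$. The resulting inequality $2(\text d(D_n)-\text d(D_{n-1}))P_n(j)>3(\alpha+1)\text d(D_n)\delta$ gives $\alpha\lesssim (\text d(D_n)-\text d(D_{n-1}))/(\text d(D_n)\,\delta)\sim n^{-1}2^{n/2}$ for linear or quadratic $\text d(D_n)$. The missing idea in your sketch is precisely this: that one can swap out \emph{most} of the $D_n$ for $D_{n-1}$ at only an exponentially small cost in success probability, which is what the eigenvalue computation establishes.

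A smaller point: the paper proves only this direction --- it exhibits one sequence that beats Grover for $\alpha$ up to order $n^{-1}2^{n/2}$ --- and writes the result as $\alpha_{c,1}=\mathcal O(n^{-1}2^{n/2})$ in the loose ``of order'' sense; there is no upper-bound argument in the paper. Your oracle-optimality sketch for the upper bound is a reasonable addition, but as stated it only yields $\alpha_{c,1}=\mathcal O(2^{n/2})$: bounding $\text d(S)\ge \alpha\, j_{\text{tot}}\,\text d(D_n)$ and $j_{\text{tot}}\ge j_{\text{exp}}$ shows that the alternative can save at most the full diffusion cost $\sim 0.69\cdot 2^{n/2}\text d(D_n)$, and this buys nothing against $(\alpha+1)$ until $\alpha=\Theta(1)$ unless you also control how the probability degrades when many $D_n$ are replaced --- which is again the $\delta=\mathcal O(2^{-n/2})$ computation you are missing.
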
	
\begin{proof}
The MED $\text d_1(\alpha)$ defined in Eq. (\ref{d 1 G}) is a search algorithm with two different diffusion operators. One is the local diffusion operator $D_m$, see (\ref{def I n m}). The other is the global diffusion operator $D_n$, see (\ref{def I n}). The local diffusion operator $D_m$ is only acting on the subspace of the database. We can follow a three-dimensional subspace: the target state $|t\rangle$ defined in Eq. (\ref{basis t}); the normalized sum of nontarget states in the target block $|ntt\rangle$ defined in Eq. (\ref{basis ntt}); the normalized sum of rest states in the database $|u\rangle$ defined in Eq. (\ref{basis u}). The notations are taken from the QPSA, see Sec. \ref{subsec:QPSA} and \cite{Korepin05,KG06}.

Operators $G_n$ and $G_m$ only change the relative amplitudes of states $|t\rangle$, $|ntt\rangle$, and $|u\rangle$. Therefore, operators $G_n$ and $G_m$ are elements of the $O(3)$ group \cite{KV06}. It is interesting to see that operator $G_m$ can be viewed as a rescaled version of $G_n$. In the new basis $\{|t\rangle,|ntt\rangle,|u\rangle\}$, the sequence $S_{n,m}(j)=G^j_m$ (which only has local Grover operators $G_m$) has the representation
\begin{equation}
    S_{n,m}(j)=G^j_m=\left(\begin{array}{ccc}
	  \cos(2j\theta_2)	& \sin(2j\theta_2) &  0\\ 
	-\sin(2j\theta_2)	& \cos(2j\theta_2) &  0\\ 
	0	& 0 & 1
	\end{array} 
	\right)
\end{equation}
For example, the matrix element $\sin(2j\theta_2)$ is obtained from
\begin{equation}
    \sin(2j\theta_2) = \langle t|S_{n,m}(j)|ntt\rangle
\end{equation}
The angle is defined as 
\begin{equation}
    \sin\theta_2 = 1/\sqrt b,\quad b=2^m
\end{equation}

We want to estimate the critical ratio $\alpha_{c,1}$. We consider the sequence:
\begin{equation}
    S_{n,n-1}(1,1,1) = G_{n-1}G_{n}G_{n-1}
\end{equation}
Here we choose $m=n-1$. It means that the database is divided into two blocks. At the basis $\{|t\rangle,|ntt\rangle,|u\rangle\}$ defined in Eqs. (\ref{basis t})-(\ref{basis u}), the sequence $S_{n,n-1}(1,1,1)$ has the matrix representation
\begin{widetext}
    \begin{equation}
    \label{def S 111}
    S_{n,n-1}(1,1,1)=\left(
    \begin{array}{ccc}
    c^2(c^2-3s^2)	& cs(3c^2-s^2)(c^2-3s^2) & s(3c^2-s^2) \\ 
    -cs(3c^2-s^2)(c^2-3s^2)	& s^2(s^2-3c^2)  & c(c^2-3s^2) \\ 
    -s(3c^2-s^2)	& c(c^2-3s^2) & 0
    \end{array} 
    \right)
    \end{equation}
\end{widetext}
with short notations $c=\cos\theta_2$ and $s=\sin\theta_2$. Note that $\sin\theta_2=\sqrt{2/N}$ since we choose $m=n-1$. The matrix $S_{n,n-1}(1,1,1)$ has the eigenvalues:
\begin{equation}
    \lambda_0 = -1,\quad\quad \lambda_\pm = e^{\pm i\gamma}
\end{equation}
with
\begin{equation}
    \tan \gamma = \frac{\Delta}{1+\cos\theta_2},\quad
    \Delta = \sqrt {3-2\cos(6\theta_2)-\cos^2(6\theta_2)}
\end{equation}
The corresponding normalized eigenvectors are denoted as $|v_0\rangle$ (with eigenvalue $\lambda_0$) and $|v_\pm\rangle$ (with eigenvalue $\lambda_\pm$). States $|v_0\rangle$ and $|v_\pm\rangle$ have the form:
\begin{subequations}
    	\begin{align}
    	\label{eigen lambda 0} |v_0\rangle =& \frac 1 {\mathcal N_0}\left(0,1,\cos\theta_2(1-4\cos^2\theta_2)\right)^T,\\
    	\label{eigen lambda +-} |v_\pm\rangle =& \frac 1 {\mathcal N_\pm}\left(\mp i \sqrt{\frac{3+\cos 6\theta_2}{2}},\cos 3\theta_2,1\right)^T
    	\end{align}
\end{subequations}
The notation $T$ means transpose. $\mathcal N_0$ and $\mathcal N_\pm$ are normalizations. Note that the eigenvector $|v_0\rangle$ (with eigenvalue $-1$) is orthogonal to the target state, i.e., $\langle t|v_0\rangle=0$. We can view the operator $S_{n,n-1}(1,1,1)$ as rotation combined with reflection. Rotation is around an axis perpendicular to $|t\rangle$. The rotation angle is $\gamma$. Reflection is around a plane perpendicular to $|t\rangle$. Iteration $S_{n,n-1}(1,1,1)$ on the initial state gives
\begin{multline}
    \label{tilde G s n}
     \langle t|S^{\tilde j}_{n,n-1}(1,1,1)|s_n\rangle= \\ \lambda_+^{\tilde j}\langle t|v_+\rangle\langle v_+|s_n\rangle+\lambda_-^{\tilde j}\langle t|v_-\rangle\langle v_-|s_n\rangle
\end{multline}
We have $\langle t|v_\pm\rangle = \mp i /\sqrt 2$. Because $N=2^n$ is a large number, the angle $\theta_2$ is a small number. We can expand:
\begin{subequations}
\begin{align}
    &\gamma= 3\sqrt 2 \theta_2+\mathcal O\left(\theta^2_2\right),\\
    &\langle v_\pm|s_n\rangle = \frac 1 {\sqrt 2} +\mathcal O\left(\theta_2\right)
\end{align}
\end{subequations}
We substitute the above relations into Eq. (\ref{tilde G s n}). After some algebra, we can get the success probability of finding the target state:
\begin{equation}
    |\langle t|S^{\tilde j}_{n,n-1}(1,1,1)|s_n\rangle|^2 = \sin^2\left(3\sqrt 2\tilde j\theta_2\right)+\mathcal O(\theta_2)
\end{equation}
Because the sandwich sequence $S_{n,n-1}(1,1,1)$ has three oracles, we set $\tilde j=3j$. Then the probability difference between $S^{\tilde j}_{n,n-1}(1,1,1)$ and Grover's algorithm (with the same number of oracles) is
\begin{equation}
|\langle t|G^j_n|s_n\rangle|^2 -|\langle t|S^{\tilde j}_{n,n-1}(1,1,1)|s_n\rangle|^2=\delta>0
\end{equation}
Here $\delta$ is a small number:
\begin{equation}
    \delta = \mathcal O(2^{-n/2})
\end{equation}
Grover's algorithm (with $j$ Grover iterations) has success probability $P_n(j)$, see Eq. (\ref{def P n}).
Then the success probability for the $S^{\tilde j}_{n,n-1}(1,1,1)$ sequence (with $\tilde j=j/3$ iterations) is $P_n(j)-\delta$. If we want the new sequence $S^{\tilde j}_{n,n-1}(1,1,1)$ to have lower expected depth than Grover's algorithm, we can set
\begin{equation}
    \frac{3(\alpha+1)\text d(D_n)}{P_n(j)}>\frac{(3\alpha+1)\text d(D_n)+2\text d(D_{n-1})}{P_n(j)-\delta}
\end{equation}    
The left-hand side (times $j/3$) is the expected depth of Grover's algorithm. The right-hand side (times $\tilde j=j/3$) is the expect depth of the $S^{\tilde j}_{n,n-1}(1,1,1)$ algorithm. The above inequality gives
\begin{equation}
\alpha < \frac{2(\text d(D_n)-\text d(D_{n-1}))P_n(j)}{3\text d(D_n)\delta}
\end{equation}
The diffusion operator $D_n$ has the depth $\text d(D_n) = \mathcal O(n)$ or $\text d(D_n) = \mathcal O(n^2)$ \cite{BBCDMSSSW95}. Then we have
\begin{equation}
    \alpha_c = \mathcal O(n^{-1}2^{n/2})
\end{equation}
This is the end of the proof.
\end{proof}

As examples, we numerically estimate $\alpha_{c,1}$ defined in Eq. (\ref{def alpha c}) for $n=4,5,\ldots,10$ based on the linear depth of $D_n$, see Appendix \ref{Appendix alpha} and Table \ref{Table 4}. Below the critical ratio $\alpha_{c,1}$, at least two-third of the global diffusion operators $D_n$ can be replaced by $D_{n-1}$ (to have lower expected depth). The saved depth scales as $\mathcal O(2^{n/2})$.

\subsection{\label{sec: two stage alpha}The Critical Ratio for the Two-stage Algorithm}

Similar to the one-stage search algorithm, we can define the critical ratio for the two-stage algorithm:
\eq
\label{def alpha c 2}
\alpha_{c,2}=\max\{\alpha|{\text d}_2(\alpha) <{\text d}_{\text G}(\alpha)\}
\en
Here $\text d_2(\alpha)$ is the MED of the two-stage search algorithm, defined in Eq. (\ref{d 2 G}). The two-stage search algorithm has two measurements. After the first measurement, we reinitialize the state in the rescaled database. The amplified amplitude of the target state $|t\rangle$ is lost in the new initialization. One can argue that  
\begin{equation}
    {\rm d}_2(\alpha)>{\rm d}_{1}(\alpha),
\end{equation}
and it implies that  $\alpha_{c,2}<\alpha_{c,1}$.  Analytically, we can prove the following theorem.
\begin{theo} \label{theorem 2}
    $\lim_{N\rightarrow\infty}\alpha_{c,2} = 1+\sqrt 3 \approx 2.732$.
\end{theo}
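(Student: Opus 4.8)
The plan is to reduce the statement to an asymptotic optimization and to read off $1+\sqrt3$ as the root of the resulting balance equation. Write Grover's MED from Eq.~(\ref{def d G}) as $\text{d}_\text{G}(\alpha)=c_\text{G}\,2^{n/2}(\alpha+1)\text{d}(D_n)$ with $c_\text{G}=j_\text{exp}/P_n(j_\text{exp})$, and recall that $\text{d}(U_t)=\alpha\,\text{d}(D_n)$, so a global operator $G_n$ has depth $(\alpha+1)\text{d}(D_n)$ while a local operator $G_{m_2}$ has depth $(\alpha+\mu)\,\text{d}(D_n)$ with $\mu=\text{d}(D_{m_2})/\text{d}(D_n)<1$. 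For a two-stage algorithm with split $|t\rangle=|t_1\rangle\otimes|t_2\rangle$ (block size $b=2^{m_2}$, $K=2^{m_1}$ blocks), I would write $\text{d}_2(\alpha)$ from Eq.~(\ref{d 2 G}) as $(D_1+D_2)/(P^{(1)}P^{(2)})$, where $D_1$ is assembled from $g$ global and $l$ local operators in the first stage and $D_2$ is the depth of the rescaled search on $2^{m_2}$ items. The goal is to choose $m_2$ and the two sequences so that this ratio drops below $\text{d}_\text{G}(\alpha)$, and to find the largest $\alpha$ for which this remains possible as $N\to\infty$.

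The analytic core is the first stage, which I would treat exactly as in the proof of Theorem~\ref{theorem 1}: in the invariant three-dimensional subspace $\{|t\rangle,|ntt\rangle,|u\rangle\}$ of Sec.~\ref{subsec:QPSA}, $G_{m_2}$ is a plane rotation by $2\theta_2$ and $G_n$ is a fixed $O(3)$ element, so any short pattern $G_{m_2}^{a}G_n^{b}$ (in particular the sandwich $G_{m_2}G_nG_{m_2}$ already used for Theorem~\ref{theorem 1}) can be diagonalized and, after a small-angle expansion, has success probability $\sin^2(\Omega\tilde{j})+\mathcal O(\theta_2)$ for an explicit effective rate $\Omega$. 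From this one extracts, in the relevant scaling of $m_2$, (i) the number of oracle queries the first stage needs to raise the target \emph{block} amplitude near $1$ — which by the partial-search lower bound is $\tfrac{\pi}{4}2^{n/2}-\Theta(2^{m_2/2})$ and is tight — and (ii) the fraction $l/(g+l)$ of those queries that can be made local. Substituting these, together with $\mu$ and the depth $D_2$ of the second stage (a rescaled instance of the same optimization, hence governed by the same constants), into $\text{d}_2(\alpha)<\text{d}_\text{G}(\alpha)$ and optimizing over $m_2$ yields an inequality $\alpha<\Phi_N(\alpha)$ whose boundary $\alpha=\Phi_N(\alpha)$ becomes, as $N\to\infty$, a quadratic whose relevant root is $1+\sqrt3$; this establishes $\liminf_N\alpha_{c,2}\ge1+\sqrt3$ by exhibiting the explicit near-optimal pair of sequences.

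For the matching bound $\limsup_N\alpha_{c,2}\le1+\sqrt3$ I would prove a lower bound on $\text{d}_2(\alpha)$ directly: the first stage must identify the block, hence by the partial-search lower bound it uses at least $\tfrac{\pi}{4}2^{n/2}-\Theta(2^{m_2/2})$ queries, of which only a provably bounded fraction can be local; the second stage is a genuine search of $2^{m_2}$ items, so it costs at least $\Theta(2^{m_2/2})$ queries and its diffusion depths are nonnegative; and $P^{(1)}P^{(2)}\le1$. Feeding these into Eq.~(\ref{d 2 G}) and optimizing over $m_2$ and over the admissible sequence shapes shows that once $\alpha$ exceeds the crossover value the ratio is at least $\text{d}_\text{G}(\alpha)$, uniformly in $N$, which together with the previous paragraph pins the limit.

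The step I expect to be hardest is locating the genuinely optimal first-stage sequence and the exact constant multiplying the $\Theta(2^{m_2/2})$ correction: one must rule out that a more elaborate interleaving of $G_n$ and $G_{m_2}$ (or a deeper nesting of block sizes inside the first stage) beats the sandwich family, and one must control the $\mathcal O(\theta_2)$ remainders uniformly while $m_2$ grows with $n$. This is a joint optimization over a whole family of sequence shapes together with a delicate trade-off between the extra queries spent to make operators local, the loss incurred in $P^{(1)}P^{(2)}$, and the per-operator gain $(\alpha+1)\to(\alpha+\mu)$ — and it is exactly this trade-off that produces the quadratic whose root is $1+\sqrt3$.
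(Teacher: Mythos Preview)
Your plan diverges from the paper at the decisive step. The paper does \emph{not} let $m_2$ scale with $n$, does not invoke partial-search lower bounds, and does not optimize over a family of block sizes. It fixes $m_2=2$ throughout. The first stage is the simple alternation $(G_nG_2)^{\tilde j}$, and the second stage is a single $G_2$ on two qubits, which succeeds with probability $1$. In the three-dimensional basis one finds directly that $|\langle u|(G_nG_2)^{\tilde j}|s_n\rangle|^2=\cos^2(\sqrt 3\,\tilde j\gamma)+\mathcal O(\gamma)$ with $\sin\gamma=2/\sqrt N$, so the block success probability matches $P_n(j)$ of Grover at $j=\sqrt 3\,\tilde j$. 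Comparing depths per matched unit, the two-stage cost is $(2\alpha+1)\text d(D_n)+\text d(D_2)$ against Grover's $\sqrt 3(\alpha+1)\text d(D_n)$, and the crossover $\sqrt 3(\alpha+1)=2\alpha+1$ is a \emph{linear} equation whose solution is $(\sqrt 3-1)/(2-\sqrt 3)=1+\sqrt 3$. Your expectation that the answer emerges from a quadratic after optimizing a growing $m_2$ is off target: the constant $1+\sqrt 3$ comes entirely from the effective rotation rate $\sqrt 3\gamma$ of the $G_nG_2$ unit together with its oracle/diffusion count $(2,1)$, with $\text d(D_2)/\text d(D_n)\to 0$.

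Two further remarks. First, the sandwich $G_{m_2}G_nG_{m_2}$ from Theorem~\ref{theorem 1} is not the relevant building block here; its rotation rate equals Grover's per oracle, so it buys no advantage in the two-stage comparison. Second, your concern about the matching upper bound $\limsup_N\alpha_{c,2}\le 1+\sqrt 3$ is legitimate, but the paper does not actually establish it: the proof only exhibits the explicit family above and reads off the crossover, i.e.\ it effectively proves $\liminf_N\alpha_{c,2}\ge 1+\sqrt 3$. So the elaborate lower bound on $\text d_2(\alpha)$ you sketch is neither needed for what the paper proves nor supplied by it.
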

\begin{proof}
Similar to the proof of Theorem \ref{theorem 1}, we construct a special sequence. Then we compare the expected depth of such a sequence with the expected depth of Grover's algorithm. Since we consider the two-stage search algorithm, we need two sequences for two stages. First, we assume that the target state $|t\rangle$ has two parts $|t\rangle=|t_1\rangle\otimes|t_2\rangle$, the same as in Eq. (\ref{eqa t t1}). And the bit length of $t_2$ is 2. For the first stage, we consider the sequence:
\begin{equation}
    S^{\tilde j}_{n,2}(1,1) = \left(G_nG_2\right)^{\tilde j}
\end{equation}
In the first stage (by the sequence $S^{\tilde j}_{n,2}(1,1)$), we find $t_1$ with high probability. The probability is denoted as $P^{(1)}_{n,2}$. In the second stage, we have a rescaled two-qubit search algorithm. One Grover operator $G_2$ can find the target state with $100\%$ probability. Therefore, the second stage has the sequence:
\begin{equation}
    S_{2}(1,0) = G_2
\end{equation}
The probability of finding $t_2$ is $P^{(2)}_{2}=1$.

In the basis $\{|t\rangle,|ntt\rangle,|u\rangle\}$ defined in Eqs. (\ref{basis t})-(\ref{basis u}), the sequence $S_{n,2}(1,1)$ has the matrix representation
\begin{equation}
S_{n,2}(1,1) = \frac 1 2\left(
\begin{array}{ccc}
 \cos 2\gamma	& \sqrt 3 &  \sin 2\gamma \\ 
\sqrt 3\cos 2\gamma	& -1 & \sqrt 3 \sin 2\gamma \\ 
-2\sin 2\gamma	& 0 & 2\cos 2\gamma
\end{array} 
\right)
\end{equation}
with $\sin \gamma = 2/\sqrt N$. We can easily find eigenvalues and eigenvectors of $S_{n,2}(1,1)$. Then we can have a matrix expression for $S^{\tilde j}_{n,2}(1,1)$. Applying $S^{\tilde j}_{n,2}(1,1)$ on the initial state $|s_n\rangle$ (Eq. \ref{def s n rewrite}), \begin{equation}
	|\langle u|S^{\tilde j}_{n,2}(1,1)|s_n\rangle|^2 = \cos^2(\sqrt 3 \tilde j\gamma)+\mathcal O(\gamma)
\end{equation}
Note that $|\langle u|S^{\tilde j}_{n,2}(1,1)|s_n\rangle|^2$ is the probability of finding the state in the nontarget block. In other words, we have
\begin{equation}
    P^{(1)}_{n,2} = 1-|\langle u|S^{\tilde j}_{n,2}(1,1)|s_n\rangle|^2
\end{equation}
The second stage has probability 1 (the two-qubit Grover's algorithm with one Grover operator has probability 1). Then $P^{(1)}_{n,2}$ is also the probability of finding the target state. 

The two stages designed above have a total of $2\tilde j+1$ queries to the oracle. In order to compare with Grover's algorithm, we set $j=\sqrt 3 \tilde j$ (where $j$ is the number of queries to the oracle in Grover's algorithm). Grover's algorithm with $j$ iterations has a success probability $P_n(j)$ of finding the target state, see Eq. (\ref{def P n}). Then the two-stage search algorithm (with sequences $S^{\tilde j}_{n,2}(1,1)$ and $S_{2}(1,0)$) can find the target state with probability $P_n(j)+\delta$. Here $\delta$ is a small number in order $\delta = \mathcal O(2^{-n/2})$. If we want the two-stage search algorithm to have lower expected depth than Grover's algorithm, we need 
\begin{equation}
    \frac{(\alpha+1)\text d(D_n)}{P_n(j)}>\frac{(2\alpha+1)\text d(D_n)+3}{\sqrt 3(P_n(j)+\delta)}
\end{equation}
The left-hand side (times $j$) is the expected depth of Grover's algorithm (with $j$ iterations). The right-hand side (times $j$) gives the expected depth of the designed two-stage search algorithm. Note that the second-stage circuit only contributes order $\mathcal O(2^{-n/2})$ to the critical value $\alpha_{c,2}$; therefore, we can neglect it here. Then we can solve the inequality
\begin{equation}
    \alpha > 1+\sqrt 3 -\frac 3 {\text d(D_n)}+\mathcal O\left(2^{-n/2}\right)
\end{equation}
For large $N$, we have the critical ratio
\begin{equation}
    \lim_{N\rightarrow\infty}\alpha_{c,2} = 1+\sqrt 3 \approx 2.732
\end{equation}
This ends of the proof.
\end{proof}

Theorem $\ref{theorem 2}$ suggests that the two-stage search algorithm can have lower expected depth than Grover's algorithm, only when the oracle can be realized as efficiently as the global diffusion operator. The real advantage of the two-stage algorithm is to mitigate the error accumulations for long circuits. For examples, see Fig. \ref{fig 2} and Appendixes \ref{Appendix examples 6} and \ref{Appendix opt examples}. We numerically estimate the value $\alpha_{c,2}$ ($n=4,5,\ldots,10$) based on a linear scale depth of $\text d(D_n)$, see Appendix \ref{Appendix alpha} and Table \ref{Table 4}.

\section{\label{sec:parallel}Parallel Running of Quantum Search Algorithm}

Now we discuss how to run the quantum search algorithm on several quantum computers in parallel. The simplest idea is running a low-success-probability search algorithm on different quantum computers. We verify the result by classical oracle and continue the algorithm until one of the quantum computers finds the target state \cite{GWC00}. First we can set a threshold success probability. Then we find the optimal sequence which gives the MED (the success probability is lower than the threshold success probability). We can run such a sequence on several quantum computers. 

Another parallel running method is to combine the random guess with search algorithm, as mentioned in Ref. \cite{Korepin05} for the QPSA. For example, the target state is divided into two parts: $|t\rangle=|t_1\rangle\otimes|t_2\rangle$, the same as in Eq. (\ref{eqa t t1}). One can randomly guess the bits $t_1$. Then one performs the search algorithm on bits $t_2$. Each quantum computer can pick up one guess. However, if more than half of the bits are chosen randomly, the quadratic speedup is lost. Such a strategy is more efficient if some of the bits have higher probability (prior information about the target state). 
    	
If we want near-deterministic (the fail probability is $\mathcal O(2^{-n/2})$) parallel running of the search algorithm, then we can apply the multistage search algorithm on different quantum computers. Suppose the target state has length $n$. The target state is divided into $p$ parts, and each part has equal $n/p$ length. Then we can assign the search algorithm on $p$ quantum computers. Each quantum computer finds one part of the target state. Combining all the results from each quantum computers, we can piece together the whole solution $t$ at one time. The sequence running on each quantum computer can be found by maximizing the number of local Grover operators $G_m$ defined in Eq. (\ref{def G m}), based on some threshold success probability ($\mathcal O(1-2^{-n/2})$). It requires at most $n$ quantum computers. Each quantum computer finds one bit of the target state. However, the most efficient way to find one bit of the target state is by running the random-guess one-bit search algorithm \cite{Korepin05}. 

\section{\label{sec:conclusion}Conclusion and Outlook}

In this paper, we propose a new way to optimize the depth of quantum search algorithms. The quantum search algorithm can be realized by global and local diffusion operators. The ratio of the depth of the oracle and  global diffusion operator is important. The ratio is denoted by $\alpha$, and  defined in Eq. (\ref{def alpha}). The minimal practical value for $\alpha$ is 1 (in one target search algorithm). When $\alpha$ is below a threshold, we can design a new algorithm (new sequence) which has a lower expected depth than Grover's algorithm. We gave examples for $\alpha=1$. In examples, our algorithm has around $20\%$ lower depth than Grover's algorithm. We also study the depth optimization in the multi-stage quantum search algorithm. In each stage, the circuit has lower depth than in Grover's algorithm. The multistage quantum search algorithm gives a natural way for parallel running of the quantum search algorithm. 

Ideas in this work can be easily generalized to the multitarget solution search \cite{BBHT98}. However, the exact number of target states is required in order to find the optimal sequence. In this paper, we only consider two kinds of diffusion operators (at each stage). Further improvement is possible if more diffusion operators are working together. It will be interesting to optimize the depth of the amplitude amplification algorithm \cite{Grover98,BHMT00}. Grover's algorithm is only optimal in the oracle measure. Our search algorithm has lower depth than Grover's algorithm.

\begin{acknowledgments}

The authors are grateful to Professor Jin Wang and Yulun Wang. V.K. is supported by SUNY Center for Quantum Information Science at Long Island Project No. CSP181035.
	
\end{acknowledgments}

\appendix

\section{\label{Appendix examples 6} Example for \texorpdfstring{$n=6$}{Lg} Search Algorithm with Depth Optimization}

Different problems have different oracles. For demonstration, we can consider the simplest oracle. As mentioned in Ref. \cite{FMLLDM17}, the oracle is single-qubit-gate equivalent to the $n$-qubit Toffoli gate $\Lambda_{n-1}(X)$. Suppose $|t\rangle = |000000\rangle$ ($n=6$). We can have the oracle:
\begin{equation*}
\Qcircuit @C=1em @R=0.982em {
& \multigate{5}{~U_t~} & \qw &\\
& \ghost{~U_t~} & \qw &\\
& \ghost{~U_t~} & \qw &\\
& \ghost{~U_t~} & \qw & \raisebox{0.7cm}{\hspace{0.3cm}=}\\
& \ghost{~U_t~} & \qw &\\
& \ghost{~U_t~} & \qw &
}
\hspace{0.7cm}
\Qcircuit @C=0.4em @R=0.6em {
& \gate{X} & \qw & \ctrl{1} & \qw & \gate{X} & \qw\\
& \gate{X} & \qw & \ctrl{1} & \qw & \gate{X} & \qw\\
& \gate{X} & \qw & \ctrl{1} & \qw & \gate{X} & \qw\\
& \gate{X} & \qw & \ctrl{1} & \qw & \gate{X} & \qw\\
& \gate{X} & \qw & \ctrl{1} & \qw & \gate{X} & \qw\\
& \gate{X} & \gate{H} & \targ & \gate{H} & \gate{X} & \qw
}
\vspace{0.3cm}
\end{equation*}
According to Ref. \cite{BBCDMSSSW95}, the $\Lambda_5(X)$ gate can be realized by a depth of 61 circuit: $\text d(\Lambda_5(X))=61$ (if the quantum computer can perform any single-qubit gates and any two-qubit controlled gates). In real quantum computers, the depth $\text d(\Lambda_5(X))$ may be much larger since not all qubits are connected. Nevertheless, we can set
\begin{equation}
    \text d(U_t) = \text d(\Lambda_5(X))+2 = 63
\end{equation}

The global diffusion operator ($n=6$) is also single-qubit-gate equivalent to the six-qubit Toffoli gate $\Lambda_5(X)$. We have
\begin{equation*}
\Qcircuit @C=1em @R=0.982em {
& \multigate{5}{~D_6~} & \qw &\\
& \ghost{~D_6~} & \qw &\\
& \ghost{~D_6~} & \qw &\\
& \ghost{~D_6~} & \qw & \raisebox{0.7cm}{\hspace{0.3cm}=}\\
& \ghost{~D_6~} & \qw &\\
& \ghost{~D_6~} & \qw &
}
\hspace{0.7cm}
\Qcircuit @C=0.4em @R=0.6em {
& \gate{H} & \gate{X} & \ctrl{1} & \gate{X} & \gate{H} & \qw\\
& \gate{H} & \gate{X} & \ctrl{1} & \gate{X} & \gate{H} & \qw\\
& \gate{H} & \gate{X} & \ctrl{1} & \gate{X} & \gate{H} & \qw\\
& \gate{H} & \gate{X} & \ctrl{1} & \gate{X} & \gate{H} & \qw\\
& \gate{H} & \gate{X} & \ctrl{1} & \gate{X} & \gate{H} & \qw\\
& \gate{Z} & \qw & \targ & \qw & \gate{Z} & \qw
}
\vspace{0.3cm}
\end{equation*}
Therefore, we can set
\begin{equation}
    \text d(D_6) = \text d(\Lambda_5(X))+2 = 63
\end{equation}
Therefore, we have the ratio $\alpha=1$, see Eq. (\ref{def alpha}). The local diffusion operators are acting on the subspace of six qubits. For example, the $D_4$ diffusion operator has the quantum circuit diagram
\begin{equation*}
\Qcircuit @C=1em @R=0.982em @!R {
& \multigate{3}{~D_4~} & \qw &\\
& \ghost{~D_4~} & \qw &\\
& \ghost{~D_4~} & \qw &\raisebox{0.7cm}{\hspace{0.3cm}=}\\
& \ghost{~D_4~} & \qw &
}
\hspace{0.7cm}
\Qcircuit @C=0.4em @R=0.6em @!R {
& \gate{H} & \gate{X} & \ctrl{1} & \gate{X} & \gate{H} & \qw\\
& \gate{H} & \gate{X} & \ctrl{1} & \gate{X} & \gate{H} & \qw\\
& \gate{H} & \gate{X} & \ctrl{1} & \gate{X} & \gate{H} & \qw\\
& \gate{Z} & \qw & \targ & \qw & \gate{Z} & \qw
}
\end{equation*}
And the local diffusion operator $D_2$ is single-qubit-gate equivalent to the CNOT gate:
\begin{equation*}
\Qcircuit @C=1em @R=0.982em @!R {
& \multigate{1}{~D_2~} & \qw & \\
& \ghost{~D_2~} & \qw & \raisebox{0.7cm}{\hspace{0.3cm}=}
}
\hspace{0.7cm}
\Qcircuit @C=0.4em @R=0.6em @!R {
& \gate{H} & \gate{X} & \ctrl{1} & \gate{X} & \gate{H} & \qw\\
& \gate{Z} & \qw & \targ & \qw & \gate{Z} & \qw
}
\end{equation*}
Accordingly, we have
\begin{align}
\text d(D_4) &= \text d(\Lambda_3(X))+2 = 15, \\
\text d(D_2) &= \text d(\Lambda_1(X))+2 = 3
\end{align}

Near-term quantum (or NISQ) computers are subjected to limited coherence time. We have to design a low depth algorithm, or divide a long circuit into shorter pieces. In the case of the $n=6$ search algorithm, Grover's algorithm needs six iterations to give the maximal probability of finding the target state. In experiments, we do not need to run the quantum search algorithm until the maximal probability is reached. For low depth consideration, we give examples of search algorithms with one or two oracles. Even in such simple scenarios, we can do better by using local diffusion operators. 

\subsection{One-oracle Algorithm}

\begin{itemize}
    \item Grover's algorithm. The one-iteration Grover's algorithm gives
    \begin{equation*}
    \Qcircuit @C=0.5em @R=0.6em @!R {
    \lstick{|0\rangle} & \gate{H} & \multigate{5}{U_t} & \multigate{5}{D_6} & \meter \\
    \lstick{|0\rangle} & \gate{H} & \ghost{U_t} & \ghost{D_6} & \meter \\ 
    \lstick{|0\rangle} & \gate{H} & \ghost{U_t} & \ghost{D_6} & \meter \\
    \lstick{|0\rangle} & \gate{H} & \ghost{U_t} & \ghost{D_6} & \meter \\
    \lstick{|0\rangle} & \gate{H} & \ghost{U_t} & \ghost{D_6} & \meter \\
    \lstick{|0\rangle} & \gate{H} & \ghost{U_t} & \ghost{D_6} & \meter 
    }
    \end{equation*}
    Measurements at the end are computational basis measurements. The whole circuit has depth 
    \begin{equation}
        \text d(G_6)=126
    \end{equation}
    We can incorporate the initial Hadamard gates into $G_6$. The success probability of finding the target state is
    \begin{equation}
        P_6(1) = |\langle t|G_6|s_6\rangle|^2 \approx 0.1348
    \end{equation}
    The result is better than that of the classical algorithm. The optimal classical search has a success probability of $3.15\%$: a single query followed by a random guess if the query fails ($1/64+1/63\approx 3.15\%$). To evaluate the efficiency, we can calculate the expected depth:
    \begin{equation}
        \frac{\text d(G_6)}{P_6(1)} \approx 935
    \end{equation}
    
    \item Our optimized algorithm. In order to lower the depth, we can apply, for example, one iteration of the local operator $G_4$. The one-iteration local Grover operator has the circuit
    \begin{equation*}
    \Qcircuit @C=0.5em @R=0.6em @!R {
    \lstick{|0\rangle} & \gate{H} & \multigate{5}{U_t} & \qw & \meter \\
    \lstick{|0\rangle} & \gate{H} & \ghost{U_t} & \qw & \meter \\ 
    \lstick{|0\rangle} & \gate{H} & \ghost{U_t} & \multigate{3}{D_4} & \meter \\
    \lstick{|0\rangle} & \gate{H} & \ghost{U_t} & \ghost{D_4} & \meter \\
    \lstick{|0\rangle} & \gate{H} & \ghost{U_t} & \ghost{D_4} & \meter \\
    \lstick{|0\rangle} & \gate{H} & \ghost{U_t} & \ghost{D_4} & \meter 
    }
    \end{equation*}
    Note that $S_{6,4}(1)=G_4$ is still a six-qubit gate, although $D_4$ is a four-qubit gate. For notation about $S_{6,4}(1)$, see Eq. (\ref{def S n m}). The whole circuit has depth
    \begin{equation}
        \text d(G_4)=78
    \end{equation}
    The depth is lower compared with that of $G_6$. The success probability of finding the target state is
    \begin{equation}
        P_{6,4}(1)=|\langle t|S_{6,4}(1)|s_6\rangle|^2 \approx 0.1181
    \end{equation}
    The success probability decreases a little bit, but still outperforms the classical case. The expected depth is:
    \begin{equation}
        \frac{\text d(S_{6,4}(1))}{P_{6,4}(1)} \approx 660
    \end{equation}
    \textbf{The circuit is $\bm{38\%}$ shorter than one $\bm{G_6}$ iteration. The expected depth is $\bm{29\%}$ lower.} The local diffusion operator may decrease the success probability, but it saves depth.
\end{itemize}

\begin{table*}[h]
	\begin{ruledtabular}
	\caption{\label{Table 1}  Estimated MED of Grover's algorithm, based on $\alpha=1$. The number $\alpha$ (defined in Eq. (\ref{def alpha})) is the ratio between oracle depth and diffusion operator depth. Diffusion operators $D_n$ have depth $\text d(D_n)=\{16,32,64,123,163,203,243\}$ with $n=4,5,\ldots,10$, which comes from the decomposition of an $n$-qubit Toffoli gate \cite{BBCDMSSSW95}. Single-run depth is the depth of the optimal sequence (without considering the success probability). The MED $\text d_\text{G}(\alpha=1)$ is defined in Eq. (\ref{def d G}). The notation $S_n(j,0)$ means $G^j_n$.}

		\begin{tabular}{ccccc}
			
			$n$ & Optimal sequence & Success probability &  Single-run depth & $\text d_\text{G}(1)$  \\ \hline
			
			4 & $S_4(1,0)$ & 0.473 & 30 & 63.47  \\
			
			5 & $S_5(2,0)$ & 0.602 & 124 & 205.83 \\
			
			6 & $S_6(4,0)$ & 0.816 & 504 & 617.36 \\
			
			7 & $S_7(6,0)$ & 0.833 & 1464 & 1756.35 \\
			
			8 & $S_8(9,0)$ & 0.861 & 2916 & 3388.03 \\
			
			9 & $S_9(12,0)$ & 0.798 & 4848 & 6071.76 \\
			
			10 & $S_{10}(18,0)$ & 0.838 & 8712 & 10397.28 \\
			
		\end{tabular}
    \end{ruledtabular} 
    
\end{table*}

\begin{table*}[h]

    \begin{ruledtabular}
	\caption{\label{Table 2} MED of one-stage search algorithm optimized by local diffusion operators, based on $\alpha=1$. The MED $\text d_1(\alpha=1)$ is defined in Eq. (\ref{d 1 G}). The depth of the diffusion operator is $\text d(D_n)=\{8,16,32,64,123,163,203,243\}$ with $n=3,4,\ldots,10$. The sequence notation means $S_{n,m}(j_1,j_2,\ldots,j_q) = G_n^{j_q}G_m^{j_{q-1}}\cdots G_n^{j_2}G_m^{j_1}$, see Eq. (\ref{def S n m}), and $j_q$ is always the number of the local diffusion operator.}
	    \begin{tabular}{ccccc}
		
		$n$ & Optimal sequence & Success probability &  Single-run depth & $\text d_1(1)$  \\ \hline
		
		4 & $S_{4,3}(1,1)$ & 0.821 & 52 & 63.32  \\
		
		5 & $S_{5,4}(1,1,1)$ & 0.849 & 154 & 181.48 \\
		
		6 & $S_{6,4}(1,1,2)$ & 0.755 & 360 & 476.97 \\
		
		7 & $S_{7,4}(1,1,2,1,2)$ & 0.887 & 1173 & 1322.75 \\
		
		8 & $S_{8,4}(1,1,2,1,2,1,2)$ & 0.875 & 2211 & 2527.43 \\
		
		9 & $S_{9,5}(1,1,2,1,2,1,2,1,2)$ & 0.831 & 3713 & 4470.20 \\
		
		10 & $S_{10,5}(1,1,2,1,2,1,2,1,2,1,2,1,2)$ & 0.847 & 6453 & 7614.56 \\
		
	    \end{tabular}
	\end{ruledtabular}
\end{table*}

\begin{table*}[h]
	\begin{ruledtabular}
	\caption{\label{Table 3} MED of the two-stage search algorithm, based on $\alpha=1$. The MED $\text d_2(\alpha=1)$ is defined in Eq. (\ref{d 2 G}). The depth of the diffusion operator is $\text d(D_n)=\{4,8,16,32,64,123,163,203,243\}$ with $n=2,3,4,\ldots,10$.}
	\begin{tabular}{cccccccc}
		
		$n$ & \multicolumn{2}{c}{Optimal sequence} & \multicolumn{2}{c}{Success probability} &  \multicolumn{2}{c}{Single-run depth} & $\text d_2(1)$  \\
		
		& Stage 1 & Stage 2 & Stage 1 & Stage 2 & Stage 1 & Stage 2 & \\ \hline
		
		4 & $S_{4,2}(1,1)$ & $S_{2}(1,0)$ & 0.953 & 1 & 48 & 18 & 69.25 \\
		
		5 & $S_{5,2}(1,1)$ & $S_{2}(1,0)$ & 0.658 & 1 & 96 & 34 & 197.51 \\
		
		6 & $S_{6,2}(1,1,1,1)$ & $S_{2}(1,0)$ & 0.791 & 1 & 384 & 66 & 569.22 \\
		
		7 & $S_{7,4}(1,4)$ & $S_{4}(2,0)$ & 0.739 & 0.908 & 792 & 274 & 1587.09 \\
		
		8 & $S_{8,5}(1,4,1,2)$ & $S_{5,4}(1,1,2)$ & 0.882 & 0.998 & 1806 & 724 & 2876.40 \\
		
		9 & $S_{9,5}(1,4,1,3,1,3)$ & $S_{5,4}(1,1,2)$ & 0.906 & 0.998 & 3542 & 884 & 4898.88 \\
		
		10 & $S_{10,5}(1,4,1,3,1,3,1,3)$ & $S_{5,4}(1,1,2)$ & 0.810 & 0.998 & 5485 & 1044 & 8081.89 \\

	    \end{tabular}
	\end{ruledtabular}
\end{table*} 

\begin{table*}[h]
	\begin{ruledtabular}
	\caption{\label{Table 4} Numerical values for critical ratios $\alpha_{c,1}$ in Eq. (\ref{def alpha c}) and $\alpha_{c,2}$ in Eq. (\ref{def alpha c 2}). The results are based on the linear scale depth of the diffusion operator $\text d(D_n)$, see Ref. \cite{BBCDMSSSW95}. Theorem \ref{theorem 1} shows that $\alpha_{c,1}$ scales as $\mathcal O(n^{-1}2^{n/2})$. Theorem \ref{theorem 2} shows that $\alpha_{c,2}$ approaches $1+\sqrt 3$ when $N=2^n$ is very large.}
	
	\begin{tabular}{cccccccc}
		
		$n$ & 4 & 5 & 6 & 7 & 8 & 9 & 10  \\ \hline
		
		$\alpha_{c,1}$ & 2.07 & 4.64 & 14.65 & 29.45 & 32.88 & 45.95 & 83.97  \\
		
		$\alpha_{c,2}$ & NA & 1.21 & 1.53 & 1.76 & 2.00 & 2.17 & 2.28 \\

	\end{tabular}
	
	\end{ruledtabular}
\end{table*} 

\subsection{Two-oracle Algorithm}

We can apply same strategy for the two-iteration search algorithm: design a circuit with local diffusion operators and find the optimal one with the least expected depth. We can also design a two-stage quantum search algorithm. And for each stage we use two oracles. 

\begin{itemize}
    \item Grover's algorithm. The two-iteration Grover's algorithm gives:
    \begin{equation*}
    \Qcircuit @C=0.5em @R=0.6em {
    \lstick{|0\rangle} & \gate{H} & \multigate{5}{G_6} & \multigate{5}{G_6} & \meter \\
    \lstick{|0\rangle} & \gate{H} & \ghost{G_6} & \ghost{G_6} & \meter \\ 
    \lstick{|0\rangle} & \gate{H} & \ghost{G_6} & \ghost{G_6} & \meter \\
    \lstick{|0\rangle} & \gate{H} & \ghost{G_6} & \ghost{G_6} & \meter \\
    \lstick{|0\rangle} & \gate{H} & \ghost{G_6} & \ghost{G_6} & \meter \\
    \lstick{|0\rangle} & \gate{H} & \ghost{G_6} & \ghost{G_6} & \meter 
    }
    \end{equation*}
    The whole circuit has depth
    \begin{equation}
        \text d(G^2_6)=252
    \end{equation}
    The success probability of finding the target state is
    \begin{equation}
        P_6(2,0)=|\langle t|G^2_6|s_6\rangle|^2 \approx 0.3439,
    \end{equation}
    and the expected depth is
    \begin{equation}
        \frac{\text d(G^2_6)}{P_6(2)}\approx 733
    \end{equation}
    
    \item Our two-stage search algorithm. We divide the target state into two parts: $|t_1\rangle$ and $|t_2\rangle$. Here $t_1$ is two bits long and $t_2$ is four bits long. Accordingly, we can design a search algorithm which has two stages: the first stage finds $|t_1\rangle$ and the second stage finds $|t_2\rangle$. In each stage, we only have two Grover operators (local or global Grover operators). 
    
    The first stage has the sequence $S^{(1)}_{6,4}(1,1,0)=G_4G_6$. We have the circuit diagram:
    \begin{equation*}
    \Qcircuit @C=0.5 em @R=.6em @!R {
    \lstick{|0\rangle} & \gate{H} & \multigate{5}{U_t} & \multigate{5}{D_6} & \multigate{5}{U_t} & \qw & \meter \\
    \lstick{|0\rangle} & \gate{H} & \ghost{U_t} & \ghost{D_6} & \ghost{U_t} & \qw & \meter \\ 
    \lstick{|0\rangle} & \gate{H} & \ghost{U_t} & \ghost{D_6} & \ghost{U_t} &  \multigate{3}{D_4} & \qw \\
    \lstick{|0\rangle} & \gate{H} & \ghost{U_t} & \ghost{D_6} & \ghost{U_t} & \ghost{D_4} & \qw \\
    \lstick{|0\rangle} & \gate{H} & \ghost{U_t} & \ghost{D_6} & \ghost{U_t} & \ghost{D_4} & \qw \\
    \lstick{|0\rangle} & \gate{H} & \ghost{U_t} & \ghost{D_6} & \ghost{U_t} & \ghost{D_4} & \qw  \gategroup{1}{3}{6}{4}{.4em}{--} \gategroup{1}{5}{6}{6}{.4em}{--}  \\
    & & \raisebox{.1cm}{\hspace{1cm} $G_6$ } & & \raisebox{.1cm}{\hspace{1cm} $G_4$ }
    } 
    \end{equation*}
    We only measure the qubit which does \textit{not} have $D_4$ (defined in $G_4$) performed. The probability of finding $|t_1\rangle$ is $P^{(1)}_{6,4}(1,1,0)$:
    \begin{equation}
        P^{(1)}_{6,4}(1,1,0) \approx 0.5604
    \end{equation}
    The first-stage circuit has depth 
    \begin{equation}
        \text d(S^{(1)}_{6,4}(1,1,0))=204
    \end{equation}
    In the first stage, suppose that the two classical measurement bits are $b_1$ and $b_2$ ($b_1,b_2\in \{0,1\}$). We cannot verify the partial bits $b_1$ and $b_2$. Since $P_{6,4}(1,1,0)>1/2$, the majority vote can be applied. 
    
    In the second stage, we choose the sequence
    \begin{equation}
        S^{(2)}_{4}(2,0) = G^2_4
    \end{equation}
    And we have the circuit
    \begin{equation*}
    \Qcircuit @C=0.5em @R=0.6em @!R {
    \lstick{|0\rangle} & \gate{X^{b_1}} & \multigate{5}{U_t} & \qw & \multigate{5}{U_t} & \qw  & \qw \\
    \lstick{|0\rangle} & \gate{X^{b_2}} & \ghost{U_t} & \qw & \ghost{U_t} & \qw  & \qw \\ 
    \lstick{|0\rangle} & \gate{H} & \ghost{U_t} & \multigate{3}{D_4} & \ghost{U_t} &  \multigate{3}{D_4} & \meter \\
    \lstick{|0\rangle} & \gate{H} & \ghost{U_t} & \ghost{D_4} & \ghost{U_t} & \ghost{D_4} & \meter \\
    \lstick{|0\rangle} & \gate{H} & \ghost{U_t} & \ghost{D_4} & \ghost{U_t} & \ghost{D_4} & \meter \\
    \lstick{|0\rangle} & \gate{H} & \ghost{U_t} & \ghost{D_4} & \ghost{U_t} & \ghost{D_4} & \meter \gategroup{1}{3}{6}{4}{.4em}{--} \gategroup{1}{5}{6}{6}{.4em}{--} \\
    & & \raisebox{.1cm}{\hspace{1cm} $G_4$ } & & \raisebox{.1cm}{\hspace{1cm} $G_4$ }
    } 
    \end{equation*}
    The initial state is the rescaled database. For example, in the first stage we find the $|01\rangle$ state; then we prepare the input $|01\rangle\otimes H^{\otimes 4}|0\rangle^{\otimes4}$. The probability of finding $|t_2\rangle$ is $P^{(2)}_{4}(2,0)$:
    \begin{equation}
        P^{(2)}_{4}(2,0)\approx 0.9084
    \end{equation}
    The second-stage circuit has depth 
    \begin{equation}
        \text d(S^{(2)}_{4}(2,0)) = 156
    \end{equation} 
    We have the expected depth
    \begin{equation}
        \frac{\text d(S^{(1)}_{6,4}(1,1,0))+\text d(S^{(2)}_{4}(2,0))}{P^{(1)}_{6,4}(1,1,0)P^{(2)}_{4}(2,0)}\approx 707
    \end{equation}
    \textbf{The expected depth is still $\bm{4.50\%}$ lower than that of the two-iteration Grover's algorithm. And the first stage has $\bm{19.05\%}$ shorter depth and the second stage has $\bm{38.10\%}$ shorter depth.} Besides, the two-stage strategy is subjected to half the errors from measurements. 
    
\end{itemize}

\section{\label{Appendix opt examples} Optimal Sequences Based on \texorpdfstring{$\alpha=1$}{Lg}}

We present detailed numerical results plotted in Fig. \ref{fig 2}. Suppose that we have quantum computers equipped with arbitrary single-qubit gates and arbitrary controlled two-qubit gates. It is well known that the $n$-qubit Toffoli gate $\Lambda_{n-1}(X)$ can be linearly decomposed into basic operators with one ancillary qubit \cite{BBCDMSSSW95}. We set the depth of the $n$-qubit Toffoli gate as $\text d(\Lambda_{n-1}(X))=\{1,5,13,29,61,120,160,200,240\}$ with $n=2,3,\ldots,10$, see Ref. \cite{BBCDMSSSW95}. Then the depth of the diffusion operator $D_n$ (\ref{def I n}) is
\begin{equation}
    \text d(D_n)=\text d(\Lambda_{n-1}(X))+2
\end{equation}
See Fig. \ref{fig G n m}. The depth of the oracle $U_t$ is characterized by the ratio $\alpha=\text d(U_t)/\text d(D_n)$. The ratio $\alpha$ is defined in Eq. (\ref{def alpha}). As an example, we set $\alpha=1$. The ratio $\alpha=1$ implies the simplest oracle construction, see Ref. \cite{FMLLDM17}. We list the optimal strategy (with the MED) of Grover's algorithm ($n=4,5,\ldots,10$) in Table \ref{Table 1}. When $N=2^n$ is large, the optimized iteration number in Grover's algorithm converges to $\lfloor 0.583\sqrt N\rfloor$, and the success probability converges to 0.844. The optimizations are independent of $\alpha$, see $\text d_\text{G}(\alpha)$ in Eq. (\ref{eqa d G}).

We numerically find the optimal sequence (optimized by the local diffusion operator). Similarly, we set $\alpha=1$. The MED is given by $\text d_1(\alpha=1)$, see Eq. (\ref{d 1 G}). The results are listed in Table \ref{Table 2}. We also numerically find the optimal sequence for the two-stage search algorithm. The MED is given by $\text d_2(\alpha=1)$, see Eq. (\ref{d 2 G}). The results are listed in Table \ref{Table 3}. In general, different values of $\alpha$ will give different optimal sequences. It is clear that both the single-run depth (depth of the optimal sequence) and the expected depth in Table \ref{Table 2} and \ref{Table 3} are smaller than that for Grover's algorithm (Table \ref{Table 1}). In practice, once $\alpha$ is known, one can guess the optimal sequence based on results with small $n$. For example, when $n$ is large, the optimal sequence is closed to (assuming that $n$ is even)
\begin{equation}
    S_{n,n/2}(1,1,2,\cdots,1,2,1,2) = G_{n/2}G_{n}G^2_{n/2}\cdots G_{n}G^2_{n/2}
\end{equation}
See Table \ref{Table 2}. The repetition number of $G_{n}G^2_{n/2}$ can be found either by numerical or analytical methods. 

\section{\label{Appendix alpha} Examples for Critical Ratios}

The ratio $\alpha$ defined in Eq. (\ref{def alpha}) is an important parameter. If $\alpha\rightarrow \infty$, Grover's algorithm is optimal in depth. The critical ratios $\alpha_{c,1}$ in Eq. (\ref{def alpha c}) and $\alpha_{c,2}$ in Eq. (\ref{def alpha c 2}) are threshold values. Below $\alpha_{c,1}$ (or $\alpha_{c,2}$), we can find a lower expected depth algorithm than Grover's algorithm. The diffusion operator $\text d(D_n)$ is single-qubit-gate equivalent to the $n$-qubit Toffoli gate $\Lambda_{n-1}(X)$. We can set $\text d(\Lambda_{n-1}(X))=\{1,5,13,29,61,120,160,200,240\}$ with $n=2,3,\ldots,10$, see Ref. \cite{BBCDMSSSW95}. Based on the depth optimization method defined in Secs. \ref{subsec: opt method} and \ref{subsec:divded and conquer}, we numerically find the critical ratios $\alpha_{c,1}$ and $\alpha_{c,2}$ in Table \ref{Table 4}.

\providecommand{\noopsort}[1]{}\providecommand{\singleletter}[1]{#1}%

\end{document}